\newtheorem{assumption}{Assumption}
\newtheorem{remark}{Remark}
\newcommand{\NA}{\mathbb{N}}         
\newcommand{\RE}{\mathbb{R}}         
\newcommand{\MG}{\mathbb{G}}        
\newcommand{\ra}{\mathrm{a}}         %
\newcommand{\rb}{\mathrm{b}}         %
\newcommand{\rd}{\mathrm{d}}         
\newcommand{\rt}{\mathrm{t}}
\newcommand{\cL}{\mathcal{L}}	     
\newcommand{\cM}{\mathcal{M}}	     
\newcommand{\cO}{\mathcal{O}}        
\newcommand{\qH}{{\hat{H}}}
\newcommand{\qp}{{\hat{p}}}
\newcommand{\tm}{{\tilde{\mu}}}
\newcommand{\tn}{{\tilde{\nu}}}
\DeclareMathOperator{\id}{Id}
\acrodef{ODE}{ordinary differential equation}
\acrodef{DAE}{differential-algebraic equation}
\acrodef{PH}{port-Hamiltonian}
\acrodef{IDA}{interconnection and damping assignment}
\acrodef{PBC}{passivity-based control}
\acrodef{BCH}{Baker-Campbell-Hausdorff}
\acrodef{KYP}{Kalman-Yakubovich-Popov}
\title{Discrete-Time Models for \\ Implicit Port-Hamiltonian Systems}
\author{
 Fernando Casta\~nos\footnotemark[1] \and  
 Hannah Michalska\footnotemark[2] \and
 Dmitry Gromov\footnotemark[3] \and
 Vincent Hayward\footnotemark[4]
 }
\begin{document}

\maketitle

\renewcommand{\thefootnote}{\fnsymbol{footnote}}
\footnotetext[1]{Departamento de Control Automatico, Cinvestav del IPN, Av. IPN 2508, Col. San Pedro Zacatenco,
 C.P. 07360, M\'exico D.F., M\'exico (\tt{castanos@ieee.org})}
\footnotetext[2]{McGill Centre for Intelligent Machines, 3480 University Street, Montreal Quebec, Canada 
 H3A 2A7 (\tt{hannah.michalska@mcgill.ca})}
\footnotetext[3]{Faculty of Applied Mathematics, St.Petersburg State University, University pr., 35, Petrodvorets, 198 904 St.Petersburg, Russia (\tt{dv.gromov@gmail.com})}
\footnotetext[4]{UPMC Paris 6, UMR7222, Institut de Syst\`emes et de Robotique, Paris, France}
\renewcommand{\thefootnote}{\arabic{footnote}}

\begin{abstract}
Implicit  representations of finite-dimensional port-Hamiltonian systems are studied from the perspective of their use in numerical simulation and control design. Implicit representations arise when a system is modeled
in Cartesian coordinates and when the system constraints are applied in the form of additional algebraic equations (the system model is in a~\acl{DAE} form). Such representations lend themselves better to sample-data approximations.
Given an implicit representation of a port-Hamiltonian system it is shown how to construct a sampled-data model that preserves the \ac{PH} structure under sample and hold.
\end{abstract}

\begin{keywords} 
 port-Hamiltonian systems, nonlinear implicit systems, symplectic integration, sampled-data systems
\end{keywords}

\pagestyle{myheadings}
\thispagestyle{plain}
\markboth{Fernando Casta\~nos, Hannah Michalska, Dmitry Gromov and Vincent Hayward}{Discrete-Time Models for Implicit port-Hamiltonian Systems}

\section{Introduction}

\acresetall

The class of Hamiltonian systems has a prominent role in many disciplines. It  was recently extended in~\cite{dalsmo1999}
to include open systems, i.e. systems that interact with the environment via a set of inputs and outputs (called \emph{ports}),
giving rise to \ac{PH} systems. Such extended models immediately reveal the passive properties of the underlying systems, making
them particularly well suited for designing \ac{PBC} laws. Two types of model representations of Hamiltonian systems are in widespread
use: the \emph{explicit representation} stated in the form of an \ac{ODE} on an abstract
manifold~\cite{rheinboldt1984,rheinboldt1991,reich1990,reich1991} and the \emph{implicit representation} stated in the form of a
\ac{DAE} usually evolving in a Euclidean space~\cite{castanos2013}.  While explicit representations are usually preferred in the
context of analytical mechanics; see~\cite{arnold2,goldstein,marsden}, the implicit \acp{DAE} models lend themselves better for numerical
computations as they lead to simpler expressions for the Hamiltonian functions. The formal relations between the two representations and
their equivalence can be established if the system's configuration space is regarded as an embedded submanifold of the Euclidean space;
see~\cite{castanos2013} for a full development.

Of principal interest here will be the construction of sampled-data (discrete-time) models of \ac{PH} systems that best approximate their
continuous-time counterparts. Sampled-data models are important in digital control implementations and permit for simpler design of \ac{PBC} 
laws directly in discrete time. In this context, the notion of ``best approximation'' deserves clarification.

For linear systems, \emph{exact} sampled-data models can be constructed by requiring that the solutions of the sampled-data and continuous-time
systems coincide at the discretization points. Point-wise model matching is usually impossible in the case of nonlinear systems short of explicit
derivation of analytical expressions for their solutions. For general dynamic systems, it is thus the choice of an integration method for
generating an approximate solution (like Euler or Runge-Kutta) whose precision relates to the compromise between complexity and order of
approximation of the given continuous-time system by its sampled-data counterpart. In the special case of Hamiltonian systems, structure
preservation is usually the main criterion for choosing a numerical method; see, e.g.,~\cite{hairer} for more details on structure-preserving numerical
schemes. Structure-preserving methods guarantee accuracy in long-horizon simulations.

It is known that autonomous Hamiltonian systems conserve two quantities: the Hamiltonian function $H$ (i.e., the energy or storage function) and a
certain two-form $\omega$, called the symplectic form. Numerical integration algorithms can either conserve $H$~\cite{laila2006,laila2007} or
$\omega$~\cite{reich1996}, but not both.
Conservation of $\omega$ is usually preferred over conservation of $H$ as the symplectic form unambiguously defines the class of Hamiltonian vector
fields; see Theorem~\ref{thm:poincare} and Remark~\ref{rem:sympl}.

\subsection*{Contribution}
Adopting a symplectic-form preserving approach, a sampled-data model for a given continuous-time \ac{PH} system in developed here.
The approach employs implicit representations of \ac{PH} systems as the latter lead to simpler expressions for the Hamiltonian functions.
Specifically, the Hamiltonian functions arising from implicit representations lend themselves well to the application of flow-splitting
numerical integration methods; see~\cite{reich1996} for a splitting method that applies to autonomous Hamiltonian systems (Hamiltonian
systems without ports). Additionally, the discrete-time modeling approach presented here preserves passivity of \ac{PH} systems
(Theorem~\ref{thm:main}). The passive structure is preserved in the sense that the discrete model is the \emph{exact} representation of
another, possibly non affine, continuous-time \ac{PH} system which, up to an approximation error of order two with respect to the sampling
interval, has the same storage function $H$ and the same output function $y$.

\subsection*{Paper structure}
Section~\ref{sec:pH} introduces the implicit representations of \ac{PH} systems. This section recalls the results from~\cite{castanos2013}
and serves mainly to introduce the notation and to state the main assumptions (\ref{ass:rankg} and~\ref{ass:elliptic}). Also, the definition
of \ac{PH} systems is extended to the case of non affine control systems. Splitting and symplectic integration methods for autonomous systems
are recalled in~\S~\ref{sec:auto}. Discrete-time models for implicit \ac{PH} systems based on vector splitting methods are developed
in~\S~\ref{sec:main} and their asymptotic properties are analyzed. An illustrative example is provided confirming the findings.

\section{Implicit representations of \acl{PH} systems} \label{sec:pH}

We begin this section by defining the configuration and phase spaces of Hamiltonian systems in implicit form. Conservative properties
are also discussed. Similarly to the case of explicit representations of Hamiltonian systems, both the Hamiltonian function and the symplectic
two-form of the implicit representation are invariant under the flow of the system. Following~\cite{dalsmo1999}, input and output ports are
added to the implicit representation, insuring passivity of the resulting \ac{PH} system. Finally, we extend the definition of \ac{PH} to the
the case of non affine control systems. This will be needed when performing backward error analysis later in~\S~\ref{sec:main}
since time discretization entails the loss of affinity.

\subsection{The configuration space}

The class of systems considered here is restricted to mechanical finite-dimensional systems with
scleronomic constraints that evolve in continuous time. Systems of this type typically consist of
$M$ rigid bodies held together as one structure by the action of constraint forces. The position of
each of the rigid bodies can be unambiguously described in terms of the position of its center of mass
and its orientation in space. The configuration space of a spatial system as a whole can thus be viewed as
a subset, $\MG$, of a Cartesian space of dimension $n = 6M$ that is expressed in terms of smooth, independent
constraint functions $g : \RE^n \to \RE^k$, with $k \le n$, as the level set
\begin{equation} \label{eq:ambient}
 \MG := g^{-1}(0) = \left\{ r \in \RE^n \;|\; g(r) = 0 \right\}  
\end{equation}
where $g^{-1}(0)$ is the inverse image of $0 \in \RE^k$.
Functional independence of the constraints is expressed in terms of the following \emph{rank condition}.

\begin{assumption} \label{ass:rankg} 
The rank of $g$ is equal to $k$ at all points of the set $g^{-1}(0)$. 
\end{assumption}

Recalling that the rank of a mapping is the rank of its tangent map (pushforward
by $g$), then in any coordinates, the condition requires that the Jacobian 
$G := \left\{ \partial g_i/\partial r_j \right\}_{ij}$
has full rank, i.e., $\rank(G) = k$ at every point of $g^{-1}(0)$.
The value $0 \in \RE^k$ is then called a \emph{regular value} of $g$, the level set $g^{-1}(0)$
is \emph{a regular level set} of $g$, and $g$ is said to be a \emph{defining map} for $g^{-1}(0)$;
see~\cite{lee}, p. 113-114. It hence follows that $\MG$ can be given the differentiable structure
of a closed embedded submanifold of $\RE^n$; see~\cite{lee}, Corollary 5.24, p. 114. The Constant-Rank
Level Set Theorem (\cite{lee}, Theorem 5.22,  p. 113), further specifies its dimension as $o=n-k$.

Because $\MG$ is an embedded submanifold, it can also be regarded as an abstract $o$-dimensional 
manifold with local coordinates $q$ (called \emph{generalized coordinates} in mechanics). It is then also
possible to exhibit an injective inclusion map: $\imath : \MG \hookrightarrow \RE^n$ (embedding), with
$\rank(\imath) = \dim \MG$, which is a homeomorphism onto the image $\imath(\MG) \subset \RE^n$ and such
that $\imath(q)=r$ and $g \circ \imath \equiv 0$. This inclusion serves as a map between local
coordinates $q$ and global Cartesian coordinates $r$.

\subsection{Hamiltonian equations} \label{sec:HamEq}

Let $T^*\MG$ be the cotangent bundle of $\MG$ and let $\left\{ q^i,\qp_i \right\}$ be local coordinates. A system is
said to be \emph{Hamiltonian} if its trajectories are integral curves of the Hamiltonian vector field
$D_\qH : T^*\MG \to T(T^*\MG)$,
\begin{displaymath}
 D_{\qH} = \frac{\partial \qH}{\partial \qp_i}\frac{\partial}{\partial q^i} - 
  \frac{\partial \qH}{\partial q^i}\frac{\partial}{\partial \qp_i} \;,
\end{displaymath}
where $\qH : T^*\MG \to \RE$ is the sum of the system's kinetic and potential energies $\hat{K} : T^*\MG \to \RE$ and
$\hat{V} : \MG \to \RE$, respectively; see~\cite{arnold2} for more details and a coordinate-free definition.

The implicit model for a Hamiltonian system is defined as follows. Let $T^*\RE^n$ be the cotangent bundle of
$\RE^n$ and let $\left\{ r^i,p_i \right\}$ be global coordinates. The implicit Hamiltonian vector field takes the form
\begin{equation} \label{eq:XHg}
 X_{H,g} = D_H - \lambda_j \frac{\partial g^j}{\partial r^i} \frac{\partial }{\partial p_i} \;, \quad g = 0 \;,
\end{equation}
with
\begin{equation} \label{eq:DH}
 D_H = \frac{\partial H}{\partial p_i}\frac{\partial}{\partial r^i} - \frac{\partial H}{\partial r^i}\frac{\partial }{\partial p_i}
\end{equation}
the unconstrained part of the Hamiltonian vector field and $H : T^*\RE^n \to \RE$ is again the sum of the kinetic and potential
energies, but expressed in global coordinates. That is, $K : T^*\RE^n \to \RE$ and $V : \RE^n \to \RE$;
see~\cite{hairer,reich1996,arnold3} for details on the derivation of this equation. The vector field unravels as the DAE system
\begin{subequations} \label{eq:HamC}
\begin{align}
 \dot{r} &= \nabla_p H(r,p) \;, \quad \dot{p} = -\nabla_r H(r,p) - G(r)^\top\lambda \\
       0 &= g(r) \label{eq:HamCon} \;.
\end{align}
\end{subequations}

By applying $X_{H,g}$ to both sides of the constraint equations $g^j = 0$, one obtains the so-called \emph{hidden constraints} 
\begin{equation} \label{eq:f}
 f^j := X_{H,g}(g^j) = \frac{\partial H}{\partial p_i}\frac{\partial g^j}{\partial r^i} = 0 \;.
\end{equation}
Thus, the system evolves on the closed submanifold
\begin{equation} \label{eq:LG}
 L\MG = \left\{ (r,p) \in T^*\RE^n \;|\; g(r) = 0 , f(r,p) = 0 \right\}
\end{equation}
and we have $X_{H,g} : L\MG \to T(L\MG)$. A more rigorous construction of the phase space $L\MG$ is given in~\cite{castanos2013},
where it is regarded as an embedding of $T^*\MG$ in $T^*\RE^n$. The map $\qp \mapsto p$ as well as the formal relation between
$\qH$ and $H$ can also be found in~\cite{castanos2013}.

The Lagrange multipliers $\lambda_j$ are defined implicitly by~\eqref{eq:XHg}. Precisely, application of 
$X_{H,g}$ to the hidden constraints makes $\lambda_j$ appear:
\begin{equation} \label{eq:f2}
 X_{H,g}(f^l) = D_H(f^l) - \lambda_j \frac{\partial g^j}{\partial r^i} \frac{\partial f^l}{\partial p_i} = 0 \;.
\end{equation}
Thus, if the matrix
\begin{equation} \label{eq:f3}
 \left\{ \frac{\partial g^j}{\partial r^i} \frac{\partial f^l}{\partial p_i} \right\}_{jl} = 
  \left\{ \frac{\partial g^j}{\partial r^i} \frac{\partial^2 H}{\partial p_i \partial p_m} \frac{\partial g^l}{\partial r^m} \right\}_{jl}
\end{equation}
is non-singular on $L \MG$, then there are unique $\lambda_j$ satisfying~\eqref{eq:f2} and forcing the integral curves to stay on $L \MG$.

\begin{assumption} \label{ass:elliptic}
The Hessian matrix
$\left\{ \partial^2 H(r,p)/\partial p_i \partial p_j \right\}_{ij}$
is positive definite for all $(r,p) \in T^*\RE^n$ so $H(r,p)$ is convex in $p$.
\end{assumption}

This assumption, satisfied by most mechanical systems, together with Assumption~\ref{ass:rankg} ensures that the matrix in~\eqref{eq:f3}
is invertible and the system is well defined. In mechanical systems, $\lambda$ is the covector of constraint forces that insure satisfaction of
the constraints during the evolution of the system.

\subsection{Energy conservation and symplecticity}

It is well known that the flows generated by Hamiltonian vector fields in explicit form preserve the Hamiltonian
function, i.e. the total energy of the system is conserved during the evolution of the
system~\cite{schaft,ortega2001,ortega2002}. As the explicit and implicit system representations are equivalent,
the conservation also holds for the implicit Hamiltonian vector field~\eqref{eq:XHg}, as is easily confirmed by
computing the Lie derivative of $H$ along the flow generated by $X_{H,g}$~\cite{hairer}. Indeed,
\begin{equation} \label{eq:LXH}
 \cL_{X_{H,g}}H = X_{H,g}(H) = D_H(H) - \lambda_j \frac{\partial g^j}{\partial r^i} \frac{\partial H}{\partial p_i} = \lambda_j f^j \;,
\end{equation}
where the third equality holds because $D_H(H) = 0$
and because of the definition of $f$ given in~\eqref{eq:f}. Eq.~\eqref{eq:LXH} shows that 
$\cL_{X_{H,g}}H \big|_{L \MG} = 0$
(cf. Eq.~\eqref{eq:LG}), so $H$ remains constant along the system trajectories.

It is also well known that, besides the 0-form $H$, Hamiltonian flows preserve a certain 2-form called the
symplectic form. In the case of implicit representations the symplectic form is given by the
formula\footnote{See~\cite{arnold2} for a coordinate-free definition.}
\begin{equation} \label{eq:om}
 \omega := \rd r^i \wedge \rd p_i \;,
\end{equation}
which acts on vectors of $T(T^*\RE^n)$, with Einstein's summation convention implied.

\begin{definition}
 A differentiable mapping $\phi : T^* \RE^n \to T^*\RE^n$ is called \emph{symplectic} if
 \begin{equation} \label{eq:sympl} 
  \phi^*\omega = \omega \;.
 \end{equation}
\end{definition}
Here, $ \phi^*$ denotes the pull-back map associated with $\phi$, defined by
\begin{definition} \label{def:pullb}
Let $\phi : \cM_1 \rightarrow \cM_2$ be a smooth map of manifolds and let $p \in \cM_1$. The pull-back map 
$\phi^* : T^*_{\phi(p)} \cM_2 \rightarrow T^*_p \cM_2$ associated with $\phi$
is a dual map to the push-forward map $\phi_*$ and is characterized by
\begin{displaymath}
 \langle \phi^* \xi , X \rangle = \langle \xi , \phi_* X \rangle \ \text{ for } \ \xi \in T^*_{\phi(p)} \cM_2 , \ X \in T_p \cM_1 \;.
\end{displaymath}
\end{definition}
The application of this definition permits to re-write (\ref{eq:sympl}) in the equivalent form
\begin{equation} \label{eq:sympl2}
 \omega (\phi_* \xi,\phi_* \eta) = \omega (\xi,\eta) \quad \text{for all } \xi,\eta \in T(T^*\RE^n) \;.
\end{equation}

The conservation of $\omega$ along $X_{H,g}$ can be established by showing that the Lie derivative $\cL_{X_{H,g}}\omega$
is equal to zero, the demonstration bearing similarity to that of the conservation of $H$; see~\cite{castanos2013}
for the explicit derivation. Here, we cite the result

\begin{theorem}\cite{hairer,leimkuhler} \label{thm:poincare}
 Let $H$ be twice continuously differentiable. The flow $\phi_t : L \MG \to L \MG$ of $X_{H,g}$ governed by~\eqref{eq:XHg} is a symplectic transformation on $L \MG$, i.e.,
 \begin{displaymath}
  \phi_t^*\omega = \omega
 \end{displaymath}
 for every $t$ for which $\phi_t$ is defined.
\end{theorem}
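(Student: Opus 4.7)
The plan is to prove symplecticity via the Lie derivative, exploiting Cartan's magic formula and the fact that the canonical form $\omega = \rd r^i \wedge \rd p_i$ is closed. Specifically, since $\rd\omega = 0$, Cartan's formula gives
\begin{equation*}
 \cL_{X_{H,g}}\omega = \rd(\iota_{X_{H,g}}\omega) + \iota_{X_{H,g}}(\rd\omega) = \rd(\iota_{X_{H,g}}\omega) \;,
\end{equation*}
so the task reduces to computing the interior product and its exterior derivative. I would split $X_{H,g} = D_H + X_c$, with $X_c := -\lambda_j (\partial g^j/\partial r^i)\,\partial/\partial p_i$ the constraint part. A direct coordinate computation, using $\iota_{\partial/\partial r^i}(\rd r^k \wedge \rd p_k) = \delta_i^k\, \rd p_k$ and the analogous identity for $\partial/\partial p_i$, yields $\iota_{D_H}\omega = \rd H$ (the standard Hamiltonian identity) and $\iota_{X_c}\omega = \lambda_j\,\rd g^j$.

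Next I would take the exterior derivative of the sum, obtaining
\begin{equation*}
 \cL_{X_{H,g}}\omega = \rd(\rd H) + \rd\lambda_j \wedge \rd g^j + \lambda_j\,\rd(\rd g^j) = \rd\lambda_j \wedge \rd g^j \;,
\end{equation*}
since $\rd^2 = 0$. On the ambient space $T^*\RE^n$ this is not zero, so the symplecticity statement is not global; it is precisely here that the restriction to $L\MG$ enters. Let $\imath_L : L\MG \hookrightarrow T^*\RE^n$ denote the inclusion. Because $g^j \equiv 0$ on $L\MG$ by~\eqref{eq:LG}, one has $\imath_L^*\,\rd g^j = \rd(g^j\circ\imath_L) = 0$, so the pullback of $\rd\lambda_j \wedge \rd g^j$ to $L\MG$ vanishes identically. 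Equivalently, evaluating on any pair $\xi,\eta \in T(L\MG)$ gives $\rd g^j(\xi) = \rd g^j(\eta) = 0$, so $(\rd\lambda_j \wedge \rd g^j)(\xi,\eta) = 0$.

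Finally, I would conclude by the standard invariance argument: since $X_{H,g}$ is tangent to $L\MG$ (this is what the hidden constraint~\eqref{eq:f} and Assumption~\ref{ass:elliptic} guarantee via the unique solvability of~\eqref{eq:f2}), the flow $\phi_t$ restricts to a map $L\MG \to L\MG$ and satisfies
\begin{equation*}
 \frac{\rd}{\rd t}\bigl(\phi_t^*\,\imath_L^*\omega\bigr) = \phi_t^*\bigl(\imath_L^* \cL_{X_{H,g}}\omega\bigr) = 0 \;.
\end{equation*}
Integrating from $0$ to $t$ and using $\phi_0 = \id$ yields $\phi_t^*\omega = \omega$ on $L\MG$, as claimed.

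The only subtle point, and the place where I would be most careful, is the interpretation of $\rd\lambda_j$: the multipliers are defined implicitly by~\eqref{eq:f2} as smooth functions on $L\MG$ (invoking Assumption~\ref{ass:elliptic} to ensure the matrix~\eqref{eq:f3} is invertible, so the implicit function theorem applies), and one must check that the computation of $\iota_{X_c}\omega$ is valid with $\lambda_j$ treated as such a function. However, this subtlety never affects the conclusion because the potentially troublesome term $\rd\lambda_j$ appears only wedged with $\rd g^j$, which vanishes on $L\MG$. This clean cancellation is what makes the proof go through without needing an explicit expression for $\lambda$.
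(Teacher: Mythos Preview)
Your proof is correct and follows exactly the approach the paper indicates: the paper does not give a detailed proof of Theorem~\ref{thm:poincare} but explicitly states that ``the conservation of $\omega$ along $X_{H,g}$ can be established by showing that the Lie derivative $\cL_{X_{H,g}}\omega$ is equal to zero,'' referring to~\cite{castanos2013} for the explicit derivation and citing the result from~\cite{hairer,leimkuhler}. Your use of Cartan's formula, the computation $\iota_{X_{H,g}}\omega = \rd H + \lambda_j\,\rd g^j$, and the observation that $\rd\lambda_j \wedge \rd g^j$ pulls back to zero on $L\MG$ (since $\imath_L^*\rd g^j = 0$) are precisely the steps needed to fill in this sketch.
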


\begin{remark} \label{rem:sympl}
 The converse statement, that \emph{every symplectic flow $\phi_t$ solves Hamilton's equations for some $H$}, is also true, so
 symplecticity is a characteristic property of Hamiltonian systems~\cite{arnold2}. This does not translate to the case of energy
 conservation, i.e., while every Hamiltonian system conserves energy, not every energy-conserving system is Hamiltonian.
\end{remark}

\subsection{port-Hamiltonian systems}

In the presence of external forces and dissipation it is convenient to represent~\eqref{eq:XHg} as an
input-output system equipped with a pair of port variables $(u,y)$, giving rise to a \emph{\ac{PH} system};
see~\cite{maschke1992,schaft,ortega2001} for the original definition as stated with respect to Hamiltonian
systems in explicit form. Extending on this definition the Hamiltonian systems in implicit form, a
port-Hamiltonian system is described in terms of the vector field $X_{H,u,g} : L \MG \times(\RE^m)^* \to T(L \MG)$:
\begin{equation} \label{eq:XHug}
 X_{H,u,g} = D_H + \left(u_lU\indices{_i^l} - \lambda_j \frac{\partial g^j}{\partial r^i} \right)\frac{\partial }{\partial p_i} \;,
  \quad g = 0 \;,
\end{equation}
with  $u \in (\RE^m)^*$ defined as the controlled or input variable, $y \in \RE^m$ is defined as the output variable that satisfies
\begin{equation} \label{eq:ry}
 y^l = U\indices{_i^l}\frac{\partial H}{\partial p_i}
\end{equation}
and where $U\indices{_i^l}$ are maps from $\RE^n$ to $\RE$.

The vector field~\eqref{eq:XHug} and the output unravel as the DAE system
\begin{align*}
 \dot{r} &= \nabla_p H(r,p) \;, \quad \dot{p} = -\nabla_r H(r,p) - G(r)^\top\lambda + U(r)u \\
       y &= U(r)^\top \nabla_p H(r,p) \\  
       0 &= g(r) \;.
\end{align*}

By analogy with the results described in Sec.~\ref{sec:HamEq}, one can determine the Lagrange multipliers $\lambda$ explicitly. The
constraints $f^a = 0$ imply that
\begin{equation} \label{eq:fu}
 X_{H,u,g}(f^a) = D_H(f^a) + u_lU\indices{_i^l}\frac{\partial f^a}{\partial p_i} -
  \lambda_j \frac{\partial g^j}{\partial r^i} \frac{\partial f^a}{\partial p_i} = 0 \;,
\end{equation}
from where it follows that, as long as~\eqref{eq:f3} is non-singular, there are unique $\lambda_j$ (in general dependent
on $u$ as well as on $r$ and $p$) such that $X_{H,u,g}(f^a) = 0$ and such that the integral curve stays on $L \MG$. 

It can be readily seen that an implicit \ac{PH} system described by~\eqref{eq:XHug} no longer preserves $H$. The Lie derivative of $H$ is now
\begin{displaymath}
 \cL_{X_{H,u,g}}(H) = X_{H,u,g}(H) = D_H(H) + u_lU\indices{_i^l}\frac{\partial H}{\partial p_i} -
  \lambda_j \frac{\partial g^j}{\partial r^i} \frac{\partial H}{\partial p_i}  = u_ly^l - \lambda_j f^j \;,
\end{displaymath}
which establishes the power balance
\begin{equation} \label{eq:pb}
 \cL_{X_{H,u,g}}H \big|_{L \MG} = u_ly^l \;.
\end{equation}
Since the product $u_ly^l$ is equal to the rate of change in energy, we say that $(u,y)$ is a power-conjugated pair of port variables. If,
in addition, the restriction of $H$ to $L \MG$ is bounded from below, i.e., if the image of $L \MG$ under $H$ is bounded from below, then 
~\eqref{eq:XHug} is called passive, or more precisely, \emph{lossless}. Boundedness of $H$ can be easily assessed using the following
proposition.

\begin{proposition}\cite{castanos2013}
 If the potential energy $V$ is lower semi-continuous and $\MG$ is compact, then $H$ restricted to $L \MG$ is bounded from below (hence, the vector
 field~\eqref{eq:XHug} describes a lossless system).
\end{proposition}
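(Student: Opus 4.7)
The plan is to reduce the boundedness claim to the extreme-value theorem for lower semi-continuous functions on compact sets, and then to derive losslessness directly from the power balance~\eqref{eq:pb}.

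First I would decompose the total energy as $H(r,p) = K(r,p) + V(r)$, where $K$ is the kinetic energy. In the mechanical setting considered in the paper, $K$ has the standard quadratic form $K(r,p) = \tfrac{1}{2}\, p^\top M(r)^{-1} p$ with mass matrix $M(r)$; Assumption~\ref{ass:elliptic} applied to this form is equivalent to $M(r) \succ 0$, which in particular gives $K(r,p) \geq 0$ everywhere. Therefore
\begin{equation*}
 H(r,p) \geq V(r) \qquad \text{for every } (r,p) \in T^*\RE^n,
\end{equation*}
and since the projection of $L\MG$ onto its $r$-coordinate is contained in $\MG$, it suffices to bound the restriction $V|_\MG$ from below.

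The key step is a standard topological fact: a lower semi-continuous function on a nonempty compact set attains its infimum, and is in particular bounded below. Applying this to $V|_\MG$ --- which inherits lower semi-continuity from $V$ on $\RE^n$ and is defined on the compact set $\MG$ --- yields $v_{\min} := \inf_{r \in \MG} V(r) > -\infty$, and hence $H \geq v_{\min}$ on $L\MG$. Losslessness is then immediate: taking the shifted, non-negative storage function $S := H - v_{\min}$, the power balance~\eqref{eq:pb} reads $\cL_{X_{H,u,g}} S \big|_{L\MG} = u_l y^l$, whose integration along any trajectory in $L\MG$ produces the lossless storage identity $S(t) - S(0) = \int_0^t u_l(\tau) y^l(\tau)\, d\tau$.

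The only subtle point --- barely one --- is the passage from Assumption~\ref{ass:elliptic} to $K \geq 0$. Strict $p$-convexity alone would only supply a bound of the form $K(r,p) \geq K(r,0) + p^\top \nabla_p K(r,0)$, so one implicitly invokes the standard mechanical convention that $K(r,\cdot)$ is minimized at $p=0$ with value zero. Equivalently, one may absorb the (smooth) fibrewise minimum $\min_p H(r,p)$ into a redefined, still lower-semi-continuous potential, and then run exactly the same compactness argument on $\MG$. Either route is routine, and all the essential work is in the single extreme-value step.
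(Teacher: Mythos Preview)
The paper does not actually include a proof of this proposition; it is simply quoted from the reference~\cite{castanos2013}. Your argument is correct and is the standard one: nonnegativity of the kinetic term together with the extreme-value theorem for lower semi-continuous functions on compact sets gives the lower bound, and the lossless identity then follows by integrating~\eqref{eq:pb}. Your caveat about the passage from Assumption~\ref{ass:elliptic} to $K \ge 0$ is well taken; in the setting of this paper the kinetic energy is always taken in the mechanical form $K(r,p)=\tfrac{1}{2}p^\top M(r)^{-1}p$ (cf.\ the example~\eqref{eq:HEx} and Remark~\ref{rem:sepa}), so $K(r,0)=0$ and $\nabla_p K(r,0)=0$ hold by construction and the issue does not arise.
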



With the inclusion of the control variable $u$, it can no longer be expected that the flow
of~\eqref{eq:XHug} be symplectic. Indeed, it is not hard to see that the Lie derivative of
$\omega$ along $X_{H,u,g}$ is in general different from zero.

\begin{proposition}\cite{castanos2013}
 The Lie derivative of $\omega$~\eqref{eq:om} along~\eqref{eq:XHug} and restricted to $L \MG$
 satisfies
 \begin{equation} \label{eq:LieOm}
  \cL_{X_{H,u,g}} \omega \big|_{L \MG}  = \rd r^i \wedge \rd(u_lU\indices{_i^l}) \;.
 \end{equation}
\end{proposition}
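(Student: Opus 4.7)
The plan is to use Cartan's magic formula to compute the Lie derivative and then exploit the fact that $g^j$ vanishes identically on $L\MG$ to kill the constraint terms. Since $\omega = \rd r^i \wedge \rd p_i$ is closed ($\rd \omega = 0$), Cartan's formula reduces to
\begin{displaymath}
 \cL_{X_{H,u,g}} \omega = \rd(\iota_{X_{H,u,g}} \omega) \;.
\end{displaymath}
So the task splits into two steps: compute the interior product $\iota_{X_{H,u,g}} \omega$ in coordinates, then take its exterior derivative and restrict to $L\MG$.

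First I would read off the components of $X_{H,u,g}$ from~\eqref{eq:XHug}, namely $X^{r^i} = \partial H / \partial p_i$ and $X^{p_i} = -\partial H / \partial r^i + u_l U\indices{_i^l} - \lambda_j (\partial g^j/\partial r^i)$, and substitute into $\iota_X(\rd r^i \wedge \rd p_i) = X^{r^i}\rd p_i - X^{p_i}\rd r^i$. Collecting terms, the $H$-contribution reassembles into the full differential $\rd H$, the constraint contribution gives $\lambda_j (\partial g^j/\partial r^i)\rd r^i = \lambda_j \rd g^j$, and the control contribution gives $-u_l U\indices{_i^l} \rd r^i$. Hence
\begin{displaymath}
 \iota_{X_{H,u,g}} \omega = \rd H + \lambda_j \rd g^j - u_l U\indices{_i^l}\, \rd r^i \;.
\end{displaymath}

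Next I would apply $\rd$ to each summand. The $\rd H$-term is killed since $\rd \circ \rd = 0$; the constraint term yields $\rd \lambda_j \wedge \rd g^j$ (again using $\rd(\rd g^j) = 0$); the control term produces $\rd r^i \wedge \rd(u_l U\indices{_i^l})$ after a sign flip from anticommuting $\rd(u_l U\indices{_i^l})$ past $\rd r^i$. Therefore
\begin{displaymath}
 \cL_{X_{H,u,g}} \omega = \rd r^i \wedge \rd(u_l U\indices{_i^l}) + \rd \lambda_j \wedge \rd g^j \;.
\end{displaymath}

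Finally, on $L\MG$ the functions $g^j$ are identically zero, so the pullback of $\rd g^j$ to $L\MG$ vanishes; the second summand disappears and the claimed identity~\eqref{eq:LieOm} follows. The main subtlety, and the only place one has to be careful, is the last step: one is not claiming that $\rd g^j$ vanishes as a form on $T^*\RE^n$, only that its restriction (i.e.\ pullback through the inclusion $L\MG \hookrightarrow T^*\RE^n$) is zero. The hidden constraints $f^j = 0$ are not used explicitly here, but it is worth noting that the Lagrange multipliers $\lambda_j$ are smooth functions of $(r,p,u)$ on $L\MG$ thanks to Assumptions~\ref{ass:rankg} and~\ref{ass:elliptic}, so $\rd \lambda_j$ is a well-defined one-form and the manipulation above is legitimate.
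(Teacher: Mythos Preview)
Your argument is correct. The paper does not actually supply a proof of this proposition; it merely cites~\cite{castanos2013} for the result, so there is no in-paper derivation to compare against. Your use of Cartan's magic formula, the identification of the interior product as $\rd H + \lambda_j\,\rd g^j - u_lU\indices{_i^l}\,\rd r^i$, and the observation that the pullback of $\rd g^j$ to $L\MG$ vanishes are exactly the standard route (and the one the paper alludes to when it says the computation ``bear[s] similarity to that of the conservation of $H$'').
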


\subsubsection*{Example: A double planar pendulum}

Let us recall an example from~\cite{castanos2013}, on which we will elaborate when discussing sampled-data models. 

Consider the model of a double planar pendulum shown in Fig.~\ref{fig:doubleP} that comprises a pair of point masses $m_a$ 
and $m_b$ whose coordinate positions are $r^a = (r^{a_x},r^{a_y})$ and $r^b = (r^{b_x},r^{b_y})$, respectively.
The massless bars are of fixed lengths $l_a$ and $l_b$ which gives rise to the two holonomic constraints:
\begin{equation} \label{eq:const}
 g^1(r) = \| r^a \|^2 - l_a^2 = 0  \;, \quad
 g^2(r) = \| r^\delta \|^2 - l_b^2  = 0 
\end{equation}
with $r := (r^a,r^b) \in \RE^n$,  $n = 4$, $k=2$, $r^\delta := r^b - r^a$. The rank of the constraint Jacobian is full as
\begin{equation} \label{eq:rank}
 \rank G(r) = \rank
  \begin{pmatrix}
   r^{a_x}      &  r^{a_y}      & 0 &          0 \\
  -r^{\delta_x} & -r^{\delta_y} & r^{\delta_x} & r^{\delta_y}
  \end{pmatrix}
  = k 
\end{equation}
for all $r \in \MG$. Therefore, $0$ is a regular value of $g$ and $\MG$ is an embedded submanifold of $\RE^4$.


The total energy is
\begin{equation} \label{eq:HEx}
 H(r,p) = \frac{1}{2} p^\top M^{-1} p + \bar{g}(m_a r^{a_y} + m_a r^{a_y}) \;,
\end{equation}
where
$M := \begin{pmatrix} m_a \mathbf{I}_n & \mathbf{0}_n \\ \mathbf{0}_n & m_b \mathbf{I}_n \end{pmatrix}$
and $\bar{g}$ is the standard gravity.

Substituting~\eqref{eq:const} and~\eqref{eq:HEx} in~\eqref{eq:HamC} gives
\begin{subequations} \label{eq:doublePC}
\begin{align}
 \dot{r}^a &= m_a^{-1} p_a \;, \quad \dot{r}^b = m_b^{-1} p_b \\
 \begin{pmatrix}
  \dot{p}_{a_x} \\ \dot{p}_{a_y} \\ \dot{p}_{b_x} \\ \dot{p}_{b_y} 
 \end{pmatrix}
  &= -
 \begin{pmatrix}
  0 \\ \bar{g}m_a \\ 0 \\ \bar{g}m_b
 \end{pmatrix}
  - 2
 \begin{pmatrix}
  r^{a_x} & -r^{\delta_x} \\
  r^{a_y} & -r^{\delta_y} \\
        0 & r^{\delta_x} \\
        0 & r^{\delta_y}
 \end{pmatrix}
 \begin{pmatrix}
  \lambda_1 \\ \lambda_2
 \end{pmatrix}
\end{align}
\end{subequations}
which, together with~\eqref{eq:const}, constitutes a set of \acp{DAE} describing the motion of the double pendulum in implicit form. The multipliers
$\lambda_1$ and $\lambda_2$ are the magnitudes of the tension along the two bars. 

\begin{figure}
\begin{center}
 \includegraphics[width=0.35\textwidth]{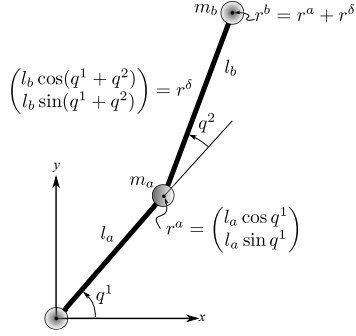}
\end{center}
\caption{A double planar pendulum, a simple Hamiltonian system.}
\label{fig:doubleP}
\end{figure}

Now, assume that the double pendulum 
is actuated by application of torques $u_1$ and $u_2$ to the joints that
correspond to the angles $q^1$ and $q^2$, respectively. The resulting linear forces are then $U^1 u_1$ and $U^2 u_2$ with
\begin{equation} \label{eq:U}
 U^1 := \{ U\indices{_i^1} \}_i = 
  \begin{pmatrix}
   -r^{a_y} \\ r^{a_x} \\ 0 \\ 0
  \end{pmatrix} 
  \frac{1}{l_a^2} \quad \text{and} \quad
 U^2 := \{ U\indices{_i^2} \}_i = 
 \begin{pmatrix}
   r^{\delta_y} \\ -r^{\delta_x} \\ -r^{\delta_y} \\ r^{\delta_x} 
  \end{pmatrix} 
  \frac{1}{l_b^2} - U^1 \;.
\end{equation}
The manifold defined by~\eqref{eq:const} is compact and the potential energy is continuous, which confirms that the double
pendulum is passive with passive outputs
$y^l = U\indices{_i^l} \frac{\partial H}{\partial p_i} = U\indices{_i^l}\dot{r}^i$.

An explicit model for the double pendulum as an \ac{ODE} can be derived by choosing the generalized coordinates on 
$\MG$ as $q^1 \in (-\pi,\pi)$ and $q^2 \in (-\pi,\pi)$, as motivated by Fig.~\ref{fig:doubleP}. The associated
embedding $r = \imath(q)$ satisfying $g \circ \imath \equiv 0$  is readily exhibited as 
\begin{equation} \label{eq:chart}
 \begin{pmatrix}
  r^{a_x} \\ r^{a_y} \\ r^{b_x} \\ r^{b_y} 
 \end{pmatrix} 
 = 
  \begin{pmatrix}
   l_a \cos q^1 \\ l_a \sin q^1 \\ l_a \cos q^1 + l_b \cos q^\rt \\ l_a \sin q^1 + l_b \sin q^\rt
  \end{pmatrix} \;, \quad q^\rt := q^1 + q^2 \;.
\end{equation}
In local coordinates, the total energy is
\begin{equation} \label{eq:HhEx}
 \qH(q,\qp) = \frac{1}{2}\qp^\top \hat{M}(q)^{-1}\qp + \bar{g}\left( m_\rt l_a\sin q^1 + m_b l_b \sin q^\rt \right) \;.
\end{equation}
with 
\begin{displaymath}
 \hat{M}(q) = \begin{pmatrix} m_\rt l_a^2 + m_b l_b^2 + 2 m_b l_a l_b \cos q^2 \!&\! m_b l_b^2 + m_b l_a l_b \cos q^2  \\ m_b l_b^2 + m_b l_a l_b \cos q^2 & m_bl_b^2 \end{pmatrix}
\end{displaymath}
The motion of the system is described by
\begin{equation}\label{eq:doublePM}
 \dot{q} =  \hat{M}(q)^{-1} \qp \;, \; \dot{\qp} = -\nabla_q V(q) - \nabla_q \left( \frac{1}{2} \qp^\top \hat{M}(q)^{-1} \qp \right) + u
\end{equation}
(see~\cite{castanos2013} for more details).

Two representations for the same system were derived, one in the form of an \ac{ODE}~\eqref{eq:doublePM} and the other as a \ac{DAE}~\eqref{eq:doublePC}.
The main point of this example can be summarized in the following remark.

\begin{remark} \label{rem:sepa}
The implicit \ac{DAE} representation of the Hamiltonian systems renders a separable Hamiltonian function~\eqref{eq:HEx}, i.e. the kinetic energy does not
depend on $r$ and the inertia matrix $M$ is a \emph{constant} diagonal matrix. Moreover, the potential energy is \emph{linear},
which results in a constant potential energy gradient. On the other hand it is easily verified that the explicit Hamiltonian representation does not share
such fortunate properties. The explicit Hamiltonian is no longer separable and the inertia matrix depends on the generalized coordinates $q$.
\end{remark}
 
The a cost of a simpler Hamiltonian function is the higher dimensional implicit model representation and the appearance of the Lagrange multipliers.
As will soon be seen, however, implicit representations are particularly advantageous for the purposes of system discretization.

\subsection{Non affine \acl{PH} systems}

It was assumed in~\eqref{eq:XHug} that the control variables enter the vector field affinely. Many physical systems exhibit
this property so, from a modeling point of view, this assumption is not too restrictive. However, for the purpose of
performing backward error analysis in \S~\ref{sec:main}, we will need to consider \ac{PH} systems for which the control
might enter in a non affine way. Motivated by the fact that $\cL_{X_{H,u,g}} \omega \big|_{L \MG} = 0$ when $u \equiv 0$,
we propose the following extended definition of a \ac{PH} system.

\begin{definition} \label{def:PH}
 A controlled vector field $X : L \MG \times (\RE^m)^* \to T(L \MG)$ is said to be \emph{\acl{PH}} if $u \equiv 0$ implies
 that the generated flow is symplectic.
\end{definition}

Passivity of a non affine \acl{PH} system can be established by redefining the passive output.

\begin{lemma} \label{lem:PH}
 A (not necessarily affine) smooth \ac{PH} system described by a vector field $X$ can always be decomposed as
 $X = X_0 + u_lZ^l$, where $X_0 : L \MG \to T(L \MG)$ is a Hamiltonian vector field and $Z^l : L \MG \times (\RE^m)^* \to T(L \MG)$
 are the input vector fields. Hence, $X$ satisfies the power balance
 $X(H) = u_ly^l$
 for some real-valued function $H$ and real-valued output functions $y^l = Z^l(H)$. (The output functions may now depend directly on $u$
 as well as on $r$ and $p$.)
\end{lemma}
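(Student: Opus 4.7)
The plan is to obtain the Hamiltonian piece $X_0$ by setting $u=0$ and invoking the converse of Poincaré's theorem, then use a Hadamard-style expansion to extract the linear-in-$u$ factor, and finally check the power balance directly from the definition of $H$ given by the first step.

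More concretely, I would first define $X_0 := X|_{u=0}$, viewed as a vector field on $L\MG$. By Definition~\ref{def:PH}, the flow of $X_0$ is symplectic on $L\MG$. Since $L\MG$ is a symplectic embedding of $T^*\MG$ (the symplectic form~\eqref{eq:om} restricts to the canonical form on $L\MG$, cf.~\cite{castanos2013}), the converse statement in Remark~\ref{rem:sympl} applies after transport through the embedding: there exists a smooth storage function $H : T^*\RE^n \to \RE$ (together with Lagrange multipliers determined by~\eqref{eq:f2}) such that $X_0 = X_{H,g}$, the implicit Hamiltonian vector field~\eqref{eq:XHg}. This $H$ will be the Hamiltonian claimed in the lemma.

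Second, I would consider the residual controlled vector field
\begin{equation*}
 X_u(r,p,u) := X(r,p,u) - X_0(r,p) \;,
\end{equation*}
which is smooth in $u \in (\RE^m)^*$ and satisfies $X_u|_{u=0} = 0$. Since $(\RE^m)^*$ is convex, Hadamard's lemma applies: writing
\begin{equation*}
 X_u(r,p,u) = \int_0^1 \frac{\rd}{\rd s} X_u(r,p,su) \, \rd s
  = u_l \int_0^1 \frac{\partial X_u}{\partial u_l}(r,p,su) \, \rd s \;,
\end{equation*}
we set $Z^l(r,p,u) := \int_0^1 \partial X_u/\partial u_l (r,p,su) \, \rd s$, which is smooth and takes values in $T(L\MG)$. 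This yields the decomposition $X = X_0 + u_l Z^l$, with the understanding that, as announced in the lemma statement, $Z^l$ may depend on $u$ in the non-affine case.

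Finally, the power balance follows by computing $X(H) = X_0(H) + u_l Z^l(H)$. Since $X_0 = X_{H,g}$, Eq.~\eqref{eq:LXH} gives $X_0(H) = \lambda_j f^j$, which vanishes on $L\MG$ because $f^j = 0$ there; hence $X(H)|_{L\MG} = u_l y^l$ with $y^l := Z^l(H)$. The only subtle step is the invocation of Remark~\ref{rem:sympl} for the implicit representation, since the classical converse theorem is usually phrased on $T^*\MG$ rather than on $L\MG$; the key observation that unlocks this step is that the pair $(L\MG,\omega|_{L\MG})$ is symplectically diffeomorphic to $(T^*\MG, \rd q^i \wedge \rd \qp_i)$, so a symplectic flow on $L\MG$ pulls back to a symplectic flow on $T^*\MG$, which is Hamiltonian by the standard result, and the generating function can then be pushed forward to supply $H$.
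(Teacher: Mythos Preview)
Your proposal is correct and follows essentially the same route as the paper: define $X_0 = X|_{u=0}$, invoke the converse of symplecticity (Remark~\ref{rem:sympl}) to identify $X_0$ as Hamiltonian, apply the Hadamard integral trick to factor $u_l$ out of the residual $X - X_0$, and read off the power balance. The only differences are cosmetic---the paper writes the integral representation in terms of $\partial X/\partial u_l$ rather than $\partial X_u/\partial u_l$ (equivalent since $X_0$ is independent of $u$), and you are more explicit than the paper about why the converse theorem, stated for $T^*\MG$, transfers to the embedded phase space $L\MG$.
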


\begin{proof}
 Following~\cite{lin1995c}, we first show that a smooth control vector field can be split into a drift and a set of vector fields having
 $u$ factored out. Let us define the drift $X_0$ as $X_0(x) = X(x,0)$ and let us define the vector fields $W^l$ by the equations
 \begin{displaymath}
  W^l(\alpha) = \frac{\partial }{\partial u_l} \Big( X(\alpha) \Big) \quad \text{for all } \alpha \in \mathcal{C}^\infty(L \MG,\RE) \;. 
 \end{displaymath}
 It follows from the chain rule that
 \begin{displaymath}
  u_lW^l(x,\theta u)(\alpha) = \frac{\rd}{\rd \theta}\Big( X(x,\theta u) \Big)(\alpha) \;.
 \end{displaymath}
 Upon integration on both sides of the equation we arrive at
 \begin{displaymath}
  u_l\int_0^1 W^l(x,\theta u)(\alpha)\rd \theta = X(x,\theta u)(\alpha)\Big|_0^1 = X(x,u)(\alpha) - X_0(x)(\alpha) \;. 
 \end{displaymath}
 Therefore, we have
 \begin{equation} \label{eq:0Z}
  X = X_0 + u_lZ^l \;,
 \end{equation}
 where the input vector fields $Z^l$ are defined by
 \begin{displaymath}
  Z^l(x,u)(\alpha) = \int_0^1 W^l(x,\theta u)(\alpha)\rd \theta \quad \text{for all } \alpha \in \mathcal{C}^\infty(L \MG,\RE) \;.
 \end{displaymath}

 It follows from the hypothesis that $X_0$ generates a symplectic flow, so it is a Hamiltonian vector field and satisfies $X_0(H) = 0$ for
 some real-valued function $H$. Applying $X$ to $H$ shows that
 $X(H) = X_0(H) + u_lZ^l(H) = u_ly^l$
 with $y^l = Z^l(H)$.
\end{proof}

\begin{remark}
For an affine \ac{PH} system, the formulae of this lemma recover the output functions~\eqref{eq:ry} with
$Z^l = U\indices{_j^l}\frac{\partial}{\partial p_j}$, that is,
$y^l = Z^l(H) = U\indices{_i^l}\frac{\partial H}{\partial p_i}$.
\end{remark}

\section{Sampled-data models for autonomous Hamiltonian systems} \label{sec:auto}

Computing a sampled-data model of a dynamical system basically amounts to computing an approximate
solution of the differential equations during a small interval of time. This problem has been studied
extensively in the literature of numerical analysis, from which we borrow some results and terminology.
In numerical analysis, a \emph{sampled-data model} is the central component of an \emph{integration method}
or a \emph{numerical integrator}, up to the point that these terms are used interchangeably.

Mathematical models for sampled-data systems arise in diverse circumstances. In the direct approach to digital
control, i.e., as opposed to the emulation of continuous control laws, the design of the controller is performed
in discrete time, the designer working directly over a sampled-data model. When designing directly in discrete time,
the controller can be directly implemented on a digital device. Also, it is possible to exploit the advantages of
switched controls or, e.g., multirate control techniques~\cite{monaco2007b}.

Computing the sampled-data model for a given nonlinear system relies on the computation of a solution $\phi_t$ of 
the corresponding \ac{ODE} or \ac{DAE}, which is in general impossible to do analytically, so one has to settle for
an approximate solution.

When simulating the behavior of dynamic systems, a discrete-time model of the continuous system is also used for computing a numerical solution
to the initial-value problem. Many different integration methods (or \emph{methods} for short) can be found in the literature. 
Let us first focus on integration methods for autonomous systems and further restrict our attention to one-step
methods defined by a transformation
\begin{displaymath}
 \psi_h : x_\alpha \mapsto x_{\alpha+1} \;,
\end{displaymath}
where the constant step-size $h$ is regarded as a parameter of the method\footnote{For a more general method, the value of the $x_{\alpha+1}$
need not depend only on $x_{\alpha}$, but may also depend on the previous values $x_{\alpha-1}, x_{\alpha - 2}$,\dots (a multistep method).
Also, the value of $h$ need not be constant in general.}. For a given initial condition $x_0$ in the phase space, $\psi_h$ is applied
recursively to generate a discrete flow $x_1, x_2, x_3, \dots$ that approximates the true flow
$\phi_h(x_0), \phi_{2h}(x_0)$, $\phi_{3h}(x_0),\dots$ of a given vector field $X$ at time instants $h,2h,3h,\dots$ In this sense, the map
$\psi_h$ is a discrete-time approximation of $\phi_h$ (or a sampled-data model of $\phi_h$).

\begin{definition}
 A one-step method has \emph{order} $s$ if the local error satisfies\footnote{We use big-O notation when quantifying approximation errors,
 i.e., for a given pair of functions $e_1(h)$, $e_2(h)$, we write $e_1(h) = \cO(e_2(h))$ as $h \to a$ as shorthand for
 $\limsup_{h \to a} \frac{\|e_1(h)\|}{\|e_2(h)\|} < \infty$.
 }
 \begin{equation} \label{eq:forward}
  \psi_h(x_0) - \phi_h(x_0) = \cO(h^{s+1}) \quad \text{as} \quad h \to 0
 \end{equation} 
 uniformly in $x_0$. A one-step method is said to be consistent if $s \ge 1$.
\end{definition}

Let us now discuss some important properties of numerical integrators, like order and symmetry.

\subsection{Symplectic methods} \label{sec:sympl-methods}

If a sampled-data model approximates the discrete time behavior of a Hamiltonian system, one could hope for $\psi_h$ to inherit its
fundamental qualitative properties: energy conservation and symplecticity. Unfortunately, it is not possible to preserve $H$ and $\omega$
simultaneously, unless $\psi_h$ agrees with the exact flow $\phi_h$ up to a reparametrization of time~\cite{ge1988}. For this reason, one has
to choose either in favor of one or the other invariant\footnote{For particular Hamiltonians there might be other invariants, such as momentum
or angular momentum, but in general there need not be.}. Energy conserving methods have received some
attention~\cite{gonzalez,simo1991,gonzalez1999,laBudde1976a,laBudde1976b}, but in light of Remark~\ref{rem:sympl}, most of the literature
focuses on symplectic integration algorithms (see~\cite{hairer,leimkuhler,channell1996} and references therein). A comparison between both
approaches is carried out in~\cite{simo1992b} for the rigid body.

A theoretical advantage of constructing a symplectic one-step method is that, even though $\psi_h$ only approximates $\phi_h$ up to
the $s$'th order, it coincides \emph{exactly} (if one disregards convergence issues) with the flow of another Hamiltonian system, a
modified Hamiltonian system described by a \emph{modified differential equation}.

\begin{theorem}\cite[p. 352]{hairer} \label{th:modHam}
 A symplectic method $\psi_h : L \MG \to L \MG$ for the constrained Hamiltonian system~\eqref{eq:HamC} has a modified equation that is locally
 of the form
 \begin{subequations} \label{eq:modHamC}
  \begin{align}
   \dot{r} &= +\nabla_p \tilde{H}(r,p) \;, \quad \dot{p} = -\nabla_r \tilde{H}(r,p) - G(r)^\top \tilde{\lambda} \\
         0 &= g(r) 
  \end{align}
 \end{subequations}
 with $\tilde{H} = H + h H_2 + h^2H_{3} + \dots$ Furthermore,
 \begin{displaymath}
  \frac{\partial H_j(r,p)}{\partial p_i}\frac{\partial g^l(r)}{\partial r^i} = 0 \;, \quad \text{for all } (r,p) \in L \MG \;,
 \end{displaymath}
 all $l = 1,\dots,k$ and all $j$. Note that the actual value of the Legendre multipliers $\tilde\lambda$ differ in general from those obtained for
 the original system (\ref{eq:HamC}).
\end{theorem}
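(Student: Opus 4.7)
The plan is to carry out backward error analysis in the spirit of Chapter~IX of Hairer, Lubich and Wanner, adapted to the constrained setting. Given the one-step map $\psi_h$, I would first expand it as a formal power series $\psi_h(x) = x + h F_1(x) + h^2 F_2(x) + \cdots$ and seek a modified vector field $\tilde{X}_h = X_{H} + h X_1 + h^2 X_2 + \cdots$ whose time-$h$ flow coincides with $\psi_h$ formally. Matching the Lie-series expansion $\exp(h\tilde{X}_h)(x)$ against $\psi_h(x)$ order by order determines each $X_j$ uniquely in terms of the previous ones; consistency of the method ($s \ge 1$) forces the leading term to be $X_H$ (on $L \MG$).

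Next, I would upgrade the modified vector field to a Hamiltonian vector field. Since $\psi_h$ is symplectic for every admissible $h$, the composition $\exp(h \tilde{X}_h) = \psi_h$ pulls back $\omega$ to itself as a formal series in $h$. Differentiating in $h$ at $h = 0$ yields $\cL_{\tilde{X}_h}\omega = 0$ as a formal series, so each coefficient $X_j$ satisfies $\cL_{X_j}\omega = 0$. By the converse of Theorem~\ref{thm:poincare} (Remark~\ref{rem:sympl}), each $X_j$ is locally of the form $X_{H_{j+1}}$ for some smooth function $H_{j+1}$, which on the unconstrained ambient space gives
\begin{displaymath}
 \tilde{H}(r,p) = H(r,p) + h H_2(r,p) + h^2 H_3(r,p) + \cdots
\end{displaymath}
and $\tilde{X}_h = D_{\tilde{H}} - \tilde{\lambda}_j (\partial g^j/\partial r^i)\partial/\partial p_i$ after the Lagrange multipliers are relabeled $\tilde{\lambda}$ to reflect their dependence on $\tilde{H}$ through the analogue of~\eqref{eq:f2}.

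The third ingredient is the hidden-constraint identity. Since $\psi_h$ maps $L \MG$ into itself, the flow of $\tilde{X}_h$ preserves both $g = 0$ and $f = 0$; in particular $\tilde{X}_h(g^l) = 0$ on $L \MG$ as a formal series in $h$. Because the constraint term in $\tilde{X}_h$ annihilates $g^l$, only the Hamiltonian part contributes, and one obtains
\begin{displaymath}
 \frac{\partial \tilde{H}}{\partial p_i}\frac{\partial g^l}{\partial r^i} \Big|_{L \MG} = 0 \;.
\end{displaymath}
Expanding this identity in powers of $h$ and collecting coefficients produces the claimed condition
$(\partial H_j/\partial p_i)(\partial g^l/\partial r^i) = 0$ on $L \MG$ for every $j \ge 2$, while the $j = 1$ case is already built into the original system via~\eqref{eq:f}.

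The main obstacle, as usual in backward error analysis, is the symplectic step: showing that symplecticity of $\psi_h$ survives as $\cL_{\tilde{X}_h}\omega = 0$ for the entire formal series, and that Remark~\ref{rem:sympl} can be invoked on $L \MG$ despite the constraints. This is delicate because the series for $\tilde{H}$ is generally divergent, so the equality must be understood term-by-term; additionally, one must verify that the local Hamiltonian $H_j$ furnished by Poincaré's lemma can be chosen smoothly and consistently across the chart, and that the determination of $\tilde{\lambda}$ via the nondegeneracy of~\eqref{eq:f3} (Assumption~\ref{ass:elliptic}) remains valid for $\tilde{H}$ truncated at any finite order.
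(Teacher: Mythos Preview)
The paper does not supply its own proof of this theorem: it is quoted verbatim as a known result from Hairer~\cite[p.~352]{hairer} and is used only as background for the backward-error viewpoint. So there is no in-paper argument to compare against; your sketch is essentially a summary of the standard proof one finds in Chapter~IX of Hairer--Lubich--Wanner, and it is along the right lines.

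A couple of points where your outline is a bit loose relative to the actual argument in that reference. First, the step ``differentiating in $h$ at $h=0$ yields $\cL_{\tilde X_h}\omega=0$'' is not quite how the symplecticity of the modified vector field is established, because $\tilde X_h$ itself depends on $h$; the usual route is inductive, showing at each order that the next correction $X_j$ is locally Hamiltonian by comparing the $h^j$-coefficient of $\psi_h^*\omega-\omega$ with what the already-constructed terms contribute. Second, in the constrained setting the symplecticity hypothesis is only on $L\MG$, not on all of $T^*\RE^n$, so one cannot directly invoke the ambient Poincar\'e lemma to produce $H_j$ globally on $T^*\RE^n$; the argument in the reference passes through a local parametrization of $L\MG$ (or equivalently works with the pulled-back symplectic form on $T^*\MG$) to obtain the Hamiltonian form, and then re-expresses the result in the ambient coordinates with the multiplier term $G(r)^\top\tilde\lambda$. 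Your final paragraph already flags both of these as the delicate points, which is accurate.
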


In other words, for an initial condition $x_0$, $\psi_h(x_0)$ is equal to the solution of~\eqref{eq:modHamC} at time $t=h$. Note that~\eqref{eq:forward} provides information about the difference between the actual flow $\phi_h$ of $X$ and the
approximate discrete flow $\psi_h$. This is the kind of information that \emph{forward} error analysis aims at. While certainly useful as an
indicator of the quality of the approximation, Eq.~\eqref{eq:forward} only evaluates the behavior of the approximate flow on the first
iteration, but says nothing about its long time behavior. From~\eqref{eq:forward} alone we cannot infer anything about the
error $x_\alpha - \phi_{\alpha h}(x_0)$ when $\alpha$ is large, so we do not know if errors accumulate or if they average out to zero.

On the other hand, Theorem~\ref{th:modHam} tells us that if $\psi_h$ is symplectic, then there exists a modified continuous system
whose flow coincides exactly with the discrete flow generated by $\psi_{h}$. The modified system~\eqref{eq:modHamC} preserves the Hamiltonian
structure of the original system~\eqref{eq:HamC} and it is `close' to it in the sense that $\tilde{H} = H + \cO(h^s)$ for a method of order
$s$. In other words, a symplectic integration method preserves the original 2-form $\omega$ and a different (but close) Hamiltonian function.
This property guaranties that the good behavior of the integration scheme is maintained during many iterations, giving a global nature to the
local property~\eqref{eq:forward}. This observation is at the center of \emph{backward error analysis}~\cite{hairer}.

\subsection{Splitting methods} \label{sec:split} 

A practical advantage of symplectic schemes is that they lend themselves well to the application of
splitting methods. To illustrate the idea, consider again the \emph{unconstrained} or, otherwise,
explicit Hamiltonian vector field $D_\qH$ on an abstract manifold $T^*\MG$. If the Hamiltonian function is separable, i.e., if
it can be written as $\qH(q,\qp) = \qH_\ra(q) + \qH_\rb(\qp)$, then the vector field can be spilt into two
Hamiltonian vector fields
\begin{displaymath}
 D_{\qH_\ra} = -\frac{\partial \qH_\ra}{\partial q^i}\frac{\partial }{\partial \qp_i} \quad \text{and} \quad 
  D_{\qH_\rb} = \frac{\partial \qH_\rb}{\partial \qp_i}\frac{\partial}{\partial q^i}
\end{displaymath}
with $D_{\qH} = D_{\qH_\ra} + D_{\qH_\rb}$. Notice that, taken separately, each vector field can be trivially integrated
exactly. For $(q_\alpha,\qp_\alpha) \in T^*\MG$ (we use the subindex $\alpha$ to refer to an element in a sequence, not
a particular coordinate) we have 
\begin{displaymath}
 \begin{pmatrix}
  \dot{q} \\ \dot{\qp} 
 \end{pmatrix}
 =
 \begin{pmatrix}
  0 \\
  -\nabla_q \qH_\ra(q)
 \end{pmatrix}
 \quad \Longrightarrow \quad \begin{pmatrix} q_{\alpha+1} \\ \qp_{\alpha+1} \end{pmatrix} =
  \begin{pmatrix} q_\alpha \\ \qp_\alpha - h\cdot \nabla_q \qH_\ra(q_\alpha) \end{pmatrix} = \phi_{\ra,h}
  \begin{pmatrix}
   q_\alpha \\ \qp_\alpha
  \end{pmatrix}
\end{displaymath}
and
\begin{displaymath}
 \begin{pmatrix}
  \dot{q} \\ \dot{\qp} 
 \end{pmatrix}
 =
 \begin{pmatrix}
  +\nabla_p \qH_\rb(\qp) \\
  0 
 \end{pmatrix}
 \quad \Longrightarrow \quad \begin{pmatrix} q_{\alpha+1} \\ \qp_{\alpha+1} \end{pmatrix} = 
  \begin{pmatrix} q_\alpha + h\cdot \nabla_\qp \qH_\rb(\qp_\alpha) \\ \qp_\alpha \end{pmatrix} = \phi_{\rb,h}
   \begin{pmatrix}
    q_\alpha \\ \qp_\alpha
   \end{pmatrix} \;.
\end{displaymath}
A first-order symplectic method can be easily constructed by performing the composition
\begin{equation} \label{eq:split}
 \psi_h = \phi_{\rb,h} \circ \phi_{\ra,h} \;.
\end{equation}
Indeed, the maps $\phi_{\rb,h}$ and $\phi_{\ra,h}$ are symplectic because they are the exact flows of Hamiltonian vector
fields. Since the composition of two symplectic maps is again symplectic, $\psi_h$ is symplectic.

Many simple Hamiltonian systems with phase space $T^*\MG$ are \emph{not} governed by separable Hamiltonians,
so splitting methods cannot be applied directly. However, the Hamiltonian function of many mechanical systems
becomes separable if the phase space is embedded in $T^*\RE^n$ (see, e.g., Remark~\ref{rem:sepa}). An interesting
symplectic method that is particularly well suited for this class of systems was proposed in~\cite{reich1996}.
Roughly speaking, the idea is to compute a symplectic method for the \emph{unconstrained} Hamiltonian vector field
$D_H : T^*\RE^n \to T(T^*\RE^n)$, i.e., a symplectic map $\psi_{H,h}$ approximating the solution of the \ac{ODE}
(notice the absence of the constraint equations)
\begin{align*}
 \dot{r} &= +\nabla_p H(r,p) \\
 \dot{p} &= -\nabla_r H(r,p)
\end{align*}
at time $t = h$.

If $H$ is separable, $\psi_{H,h}$ can be readily found. The method $\Psi_{H,g,h}$ for the original constrained Hamiltonian vector field
$X_{H,g}$ is then constructed by taking the image of $\psi_{H,h}$ and applying a correction term that ensures that the value of
$\Psi_{H,g,h}$ belongs to $L \MG$, so that the constraints are satisfied. The correction is done in a careful way so that the
resulting map is still symplectic (see also~\cite{leimkuhler}). Depending on the accuracy of $\psi_{H,h}$, the resulting $\Psi_{H,g,h}$ can
be of first or second order (see \S~\ref{sec:main} for details).

\subsection{Symmetric methods}

For each $t$ for which the solution is defined, the flow $\phi_t(x_0)$ of an autonomous differential equation defines
a transformation on the phase space. It follows from the group property of the flow~\cite{arnold4} that the inverse of the transformation can
be obtained simply by reversing time, that is, $\phi_t^{-1}(x_1) = \phi_{-t}(x_1) = x_0$. Needless to say, this property does not hold
in general for a discrete approximation $\psi_h$, which motivates the  following definition.

\begin{definition}
The \emph{adjoint method} $\psi_h^*$ of a method $\psi_h$ is the inverse map of the original method with reversed time step $-h$, i.e.,
\begin{displaymath}
 \psi_h^* := (\psi_{-h})^{-1} \;.
\end{displaymath}
In other words, $x_1 = \psi_h^*(x_0)$ is implicitly defined by $\psi_{-h}(x_1) = x_0$. A method for which $\psi_h^* = \psi_h$ is called
\emph{symmetric}.
\end{definition}

From a theoretical point of view, an approximate discrete-time flow should be symmetric because actual continuous flows are. But symmetry is
important from a practical point of view too. It has been proved in~\cite{yoshida1990} that all symmetric methods are of even order, a fact
that can be exploited to construct high-order methods from simple lower-order methods. For example, one can take a first-order
non-symmetric method, compute its adjoint and construct a symmetric method
\begin{equation} \label{eq:sym}
 \Psi_h = \psi_{\frac{h}{2}}\circ\psi_{\frac{h}{2}}^* \;.
\end{equation}
We know that $\Psi_h$ is at least first order, but since $\Psi_h$ is symmetric, we also know that the order has to be even, so we conclude
that the method is actually of second order.

The scheme~\eqref{eq:sym} works particularly well for splitting methods. Take, e.g., the integration
scheme~\eqref{eq:split}. The maps $\phi_{\rb,h}$ and $\phi_{\ra,h}$ are symmetric (because they are exact solutions of a
differential equation), but their composition is not symmetric in general. To remedy this, one can compute the adjoint method
\begin{equation} \label{eq:adj}
 \psi^*_h = \left(\phi_{\rb,-h}\circ \phi_{\ra,-h}\right)^{-1} = 
  \phi_{\ra,-h}^{-1} \circ \phi_{\rb,-h}^{-1} = 
   \phi_{\ra,h} \circ \phi_{\rb,h}
\end{equation}
and, using~\eqref{eq:split},~\eqref{eq:adj} and~\eqref{eq:sym}, construct
\begin{displaymath}
 \Psi_h = \phi_{\rb,\frac{h}{2}}\circ \phi_{\ra,\frac{h}{2}} \circ \phi_{\ra,\frac{h}{2}} \circ
  \phi_{\rb,\frac{h}{2}} = \phi_{\rb,\frac{h}{2}}\circ \phi_{\ra,h} \circ \phi_{\rb,\frac{h}{2}} \;,
\end{displaymath}
which is a second-order symmetric method.

\subsection{Modified vector fields and exponential representations}

Consider a vector-valued function $F$ and a vector field $X$, both defined on $L \MG$. If $F$ and $X$ are analytic, then the composition of $F$ and
the generated flow $\phi_t(x_0)$ can be expanded in a Taylor series around $t = 0$,
\begin{displaymath}
 F\circ\phi_t(x_0) = \exp(tX)F(x_0) := \sum_{i=0}^\infty \frac{t^i}{i!} X^i(F)(x_0) \;,
\end{displaymath}
where $X^0(F) = F$, $X^2(F) = X(X(F))$, $X^3(F) = X(X^2(F))$ etc (see~\cite{olver,varadarajan} for details). In particular, if $F$ is taken
as the identity function $\id$, one obtains the flow
$\phi_t(x_0) = \exp(tX)\id(x_0)$.
Since an $s$-order method $\psi_h$ for $X$ coincides with the flow of a modified vector field
$\tilde{X} = X + \cO(h^s)$~\cite[p. 340]{hairer}, it is also possible to expand $\psi_h$ in a Taylor series,
$\psi_h(x_0) = \exp(h\tilde{X})\id(x_0)$.
This exponential notation is a convenient way to express the relationship between a vector field and the flow generated by it, as well as 
to analyze the composition of flows.

\section{A splitting method for implicit \acl{PH} systems} \label{sec:main}

Suppose that there is a sequence of commands $\{u\indices{_l_\alpha}\}_{\alpha \in \NA}$. Each command in the sequence arrives at the
discrete instants of time $\alpha = 0,h,2h,\dots$ where $h$ is a positive real number ---such commands could be generated, e.g., by a computer
program. Suppose further that a zero-order hold transforms this sequence into piece-wise constant controls
\begin{equation} \label{eq:zoh}
 u_l(t) \equiv u\indices{_l_\alpha} \quad \text{for } t \in [\alpha h,\alpha h+h) \;,
\end{equation}
which are fed into a \ac{PH} system. Let $\phi_t(x_0,u(\cdot))$ be the integral curve of the non-autonomous vector field~\eqref{eq:XHug} with
control~\eqref{eq:zoh} and passing through $x_0 \in L \MG$ at $t = 0$. Let
\begin{displaymath}
 y^l(t) = \left(U\indices{_i^l} \frac{\partial H}{\partial p_i}\right)\circ\phi_t(x_0,u(\cdot))
\end{displaymath}
be the corresponding outputs and let $\{y\indices{^l_\alpha}\}$ with $y\indices{^l_\alpha} := y^l(\alpha h)$ be the sequence obtained by
sampling them at discrete instants of time $\alpha h$ (see Fig.~\ref{fig:ZOH}). We call the resulting system a \emph{sampled-data \acl{PH}
system}.

\begin{figure}
\begin{center}
 \includegraphics[width=0.85\textwidth]{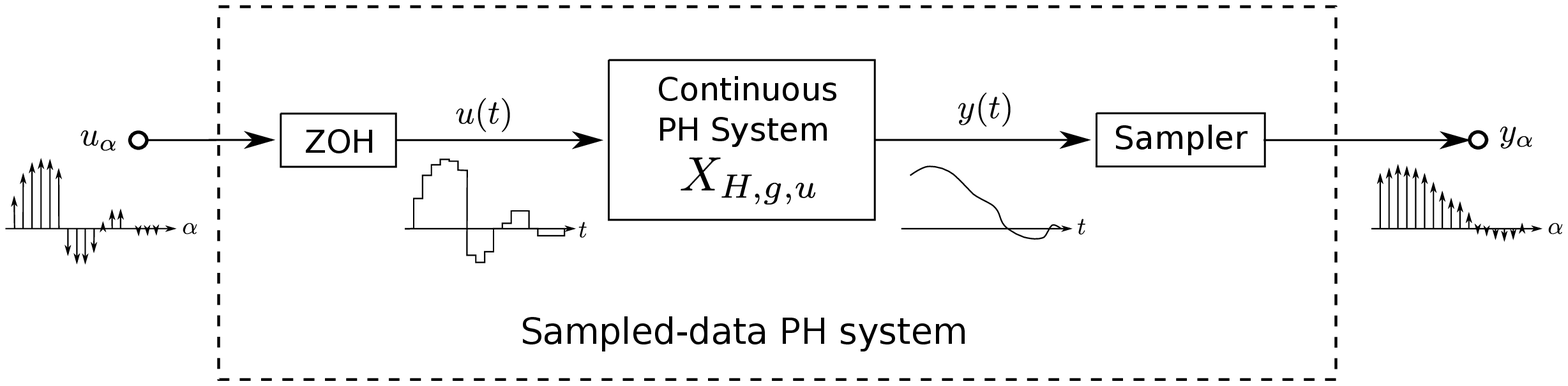}
\end{center}
\caption{Sampled-data \ac{PH} system with sampling period $h$. The zero-order hold produces piece-wise constant inputs
$u_l(t) \equiv u\indices{_l_\alpha}$ for $t \in [\alpha h,\alpha h+h)$, which is fed to the continuous-time \ac{PH} system. The output is then
sampled to generate the discrete-time output sequences $\{y\indices{^l_\alpha}\}$.}
\label{fig:ZOH}
\end{figure}

The goal here is to develop a method for derivation of discrete-time (or sampled-data) models for \ac{PH} systems given by implicit
vector fields. The underlying idea is to split the \ac{PH} vector field into two components: the vector field describing an unconstrained
system with state-space equal to the whole $T^*\RE^n$, and a vector field containing the Lagrange multipliers, the one that maintains
the trajectories on the submanifold $L \MG$. Splitting the vector field simplifies the computation of the sampled-data model by
decomposing the problem into two simpler subproblems.

Now we extend the results of~\cite{reich1996} to the~\ac{PH} case. We show that, with a 
straightforward modification, the method presented in~\cite{reich1996}, originally intended as an
integration scheme for autonomous Hamiltonian systems, can be used to compute sampled-data models that
preserve the main properties of a \ac{PH} system.

Consider again the implicit vector field
\begin{equation} \label{eq:XHug2}
 X_{H,u,g} = D_H + \left(u_lU\indices{_i^l} - \lambda_j \frac{\partial g^j}{\partial r^i} \right)\frac{\partial }{\partial p_i} \;,
  \quad g = 0 \;, \quad y^l = U\indices{_i^l}\frac{\partial H}{\partial p_i} \;,
\end{equation}
defined on $L \MG$, with $D_H$ as in~\eqref{eq:DH} and with piece-wise constant controls~\eqref{eq:zoh}. Suppose that a method
$\psi_{H,u,h} : T^*\RE^n \to T^*\RE^n$ of order $s \ge 1$ for the unconstrained vector field
$X_{H,u} = D_H + u_lU\indices{_i^l} \frac{\partial }{\partial p_i}$
has been computed. Again, in many cases $H$ is separable so a high-order and symmetric method with a \ac{PH} modified vector field can be
easily found. The controls $u_l$ are constant during each sampling interval, which further simplifies the task of finding $\psi_{H,u,h}$.\\

Let us define the map
\begin{equation} \label{eq:proj}
 \Phi_{\Lambda,h} \begin{pmatrix} r_\alpha \\ p_\alpha \end{pmatrix} := 
  \begin{pmatrix}
   r_\alpha \\ p_\alpha - h G(r_\alpha)^\top \Lambda
  \end{pmatrix} \;.
\end{equation}
Loosely, this is an approximation of the integral curve of the remnant vector field 
$-\lambda_j \frac{\partial g^j}{\partial r^i}\frac{\partial }{\partial p_i}$
evaluated at $t=h$ and subject to $g = 0$. More precisely, for arbitrary functions $\lambda_j$ of $r$ and $p$, we have that
\begin{displaymath}
 D_{\lambda_j g^j} := \frac{\partial (\lambda_j g^j)}{\partial p_i}\frac{\partial }{\partial r^i} -
  \frac{\partial (\lambda_j g^j)}{\partial r_i}\frac{\partial }{\partial p^i} =
 g^j\frac{\partial \lambda_j}{\partial p_i} \frac{\partial }{\partial r^i} - 
   \left( \lambda_j\frac{\partial g^j}{\partial r_i} + g^j\frac{\partial \lambda_j}{\partial r_i} \right) \frac{\partial }{\partial p^i}
\end{displaymath}
and, for $r \in \MG$, the vector field reduces to
\begin{equation} \label{eq:Dlg}
 D_{\lambda_j g^j} = \lambda_j D_{g^j} = -\lambda_j \frac{\partial g^j}{\partial r^i}\frac{\partial }{\partial p_i} \;.
\end{equation}
In other words, when restricted to $\MG$, the vector field $-\lambda_j \frac{\partial g^j}{\partial r^i}\frac{\partial }{\partial p_i}$
is Hamiltonian (hence it generates a symplectic flow).

\begin{lemma}\cite{reich1996} \label{lem:remn}
 Let $g(r_\alpha)=0$. Then, the map~\eqref{eq:proj} is a first-order symplectic method for $D_{\Lambda_j g^j}$. That is, for $r_\alpha \in \MG$,
 \begin{displaymath}
  \Phi_{\Lambda,h}
   \begin{pmatrix}
    r_\alpha \\ p_\alpha
   \end{pmatrix}
   = \exp(h\tilde{D}_{\Lambda_j g^j})\id
   \begin{pmatrix}
    r_\alpha \\ p_\alpha
   \end{pmatrix}
   \;, \quad \tilde{D}_{\Lambda_j g^j} = D_{\tilde{\Lambda}_j g^j}
 \end{displaymath}
 with $\tilde{\Lambda}$ a modified or perturbed version of $\Lambda$.
\end{lemma}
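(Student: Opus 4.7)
My plan is a three-step verification. \emph{First}, I would check symplecticity by a direct pullback calculation. Under $\Phi_{\Lambda,h}$ the coordinate $r$ is fixed and $p_i$ is translated to $p_i - h\Lambda_j \partial g^j/\partial r^i$, so
\[
 \Phi_{\Lambda,h}^*\omega \;=\; \rd r^i \wedge \rd p_i \;-\; h\Lambda_j \frac{\partial^2 g^j}{\partial r^i \partial r^k}\,\rd r^i \wedge \rd r^k,
\]
and the correction vanishes because the symmetric Hessian of $g^j$ contracts to zero against the antisymmetric wedge product $\rd r^i \wedge \rd r^k$. Hence $\Phi_{\Lambda,h}^*\omega = \omega$.

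\emph{Second}, I would establish first-order consistency by Taylor-expanding the exact flow of $D_{\Lambda_j g^j}$. By~\eqref{eq:Dlg}, on $\MG$ this vector field collapses to $-\Lambda_j (\partial g^j/\partial r^i)\partial/\partial p_i$, and since $\dot r = 0$ along the flow the position stays at $r_\alpha$; with $\Lambda$ constant the exact flow at time $h$ is therefore $(r_\alpha, p_\alpha - h G(r_\alpha)^\top \Lambda)$, matching~\eqref{eq:proj} up to terms of order $h^2$ (in fact, exactly).

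\emph{Third}, with $\Phi_{\Lambda,h}$ symplectic and locally first-order consistent with a Hamiltonian vector field, the backward error analysis of Theorem~\ref{th:modHam} supplies a modified Hamiltonian $\tilde H = \Lambda_j g^j + h H_2 + h^2 H_3 + \cdots$ whose exact flow at time $h$ reproduces $\Phi_{\Lambda,h}$, with each $H_j$ automatically obeying the hidden-constraint relation $(\partial H_j/\partial p_i)(\partial g^l/\partial r^i) = 0$ on $L\MG$. Since $\Phi_{\Lambda,h}$ leaves $r$ fixed and preserves $L\MG$, $\tilde H$ must depend only on $r$ and vanish on $\MG$, and by Assumption~\ref{ass:rankg} a local smooth-division argument then allows $\tilde H$ to be written as $\tilde\Lambda_j g^j$ for a modified coefficient $\tilde\Lambda$.

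The main obstacle is this last factorization step: extracting the specific form $\tilde\Lambda_j g^j$ from the backward-error series requires a Hadamard-type decomposition of smooth functions vanishing on the regular level set $g^{-1}(0)$. A much cleaner shortcut, valid precisely because $\Lambda$ is constant, is to observe that $\Phi_{\Lambda,h}$ already coincides with the \emph{exact} flow of the Hamiltonian $\Lambda_j g^j$ (the equations $\dot r = 0$, $\dot p = -G(r_\alpha)^\top \Lambda$ integrate in closed form), so one may simply take $\tilde\Lambda = \Lambda$ with all higher corrections terminating, making the exponential representation fully explicit without invoking any BCH machinery.
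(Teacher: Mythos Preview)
The paper does not supply its own proof of this lemma; it is quoted from \cite{reich1996} and used as a black box in the proof of Theorem~\ref{thm:main}. So there is nothing to compare against, and your proposal must stand on its own merits.

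Your shortcut at the end is a complete and clean proof. With $\Lambda$ a constant vector (which is how \eqref{eq:proj} is parameterised and how it is used in \eqref{eq:reich}), the Hamiltonian $\Lambda_j g^j$ depends only on $r$, so Hamilton's equations read $\dot r = 0$, $\dot p = -G(r)^\top\Lambda$ and integrate exactly to \eqref{eq:proj}. Symplecticity is then automatic, the exponential representation holds with $\tilde\Lambda = \Lambda$, and both the pullback computation in your first step and the backward-error machinery in your third step become redundant.

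Two inaccuracies in your longer third-step route are worth flagging, even though the shortcut bypasses them. First, Theorem~\ref{th:modHam} is stated for \emph{constrained} Hamiltonian systems of the form~\eqref{eq:HamC}, whereas $D_{\Lambda_j g^j}$ is an unconstrained Hamiltonian vector field on $T^*\RE^n$; the relevant backward-error result is the standard unconstrained one. Second, $\Phi_{\Lambda,h}$ does \emph{not} preserve $L\MG$ in general: it fixes $r$, so $g(r)=0$ survives, but the hidden constraint $f(r,p)=0$ depends on $p$ and is generically destroyed by the momentum shift. The correct reason $\tilde H$ must depend only on $r$ is simply that the method leaves $r$ fixed, forcing $\partial\tilde H/\partial p_i = 0$; no appeal to $L\MG$ is needed. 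The subsequent claim that $\tilde H$ vanishes on $\MG$ (needed for the Hadamard factorisation) is not justified by the argument you give, and this is exactly the obstacle you correctly identified before switching to the shortcut.
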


A method for~\eqref{eq:XHug2} can be obtained from the symmetric composition
\begin{equation} \label{eq:reich}
 \Psi_{H,u,g,h} = \Phi_{\mu,\frac{h}{2}} \circ \psi_{H,u,h} \circ \Phi_{\nu,\frac{h}{2}} \;.
\end{equation}
For each $(r_\alpha,p_\alpha) \in L \MG$, the values of $\mu$ and $\nu$ are determined \emph{implicitly} by the constraints
$g(r_{\alpha + 1}) = 0$ and $f(r_{\alpha+1},p_{\alpha+1}) = 0$ (i.e., by $(r_{\alpha+1},p_{\alpha + 1}) \in L \MG$).
In this way, $\Psi_{H,u,g,h}$ defines a transformation on $L \MG$.

The transformation $\Psi_{H,u,g,h}$ produces an approximate discrete flow for a given command sequence $\{u\indices{_l_\alpha}\}$. From
this flow, an approximate output sequence can be obtained by evaluating the output function
$y^l = U\indices{_i^l} \frac{\partial H}{\partial p_i}$ at each discrete time $\alpha h$.

\begin{theorem} \label{thm:main}
 Consider the implicit method $\Psi_{H,u,g,h}$~\eqref{eq:reich} and let $\tilde{X}_{H,u}$ be the modified vector field of $\psi_{H,u,h}$. 
 \begin{romannum}
  \item \label{it:order} The method preserves the constraints $g^j = 0$, $f^j = 0$ and is of order $\bar{s} = \min(s,2)$, where $s$ is the order
   of $\psi_{H,u,h}$.
  \item \label{it:syme} The method is symmetric if $\psi_{H,u,h}$ is symmetric.
  \item \label{it:symp} If $\psi_{H,u,h}$ is symplectic for $u\indices{_l_\alpha} \equiv 0$ (i.e., if $\tilde{X}_{H,u}$ is \acl{PH}), then
   the modified vector field $\tilde{X}_{H,u,g} : L \MG \to T(L \MG)$ is also \acl{PH} with Hamiltonian and output functions
   \begin{equation} \label{eq:HO}
    \tilde{H} = H + \cO(h^{\bar{s}}) \quad \text{and} \quad \tilde{y}^l = y^l + \cO(h^{\bar{s}}) \;.
   \end{equation}
 \end{romannum}
\end{theorem}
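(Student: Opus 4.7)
The plan is to treat the three claims in sequence, exploiting the fact that $\Psi_{H,u,g,h}$ is a symmetric Strang-like composition of the unconstrained integrator $\psi_{H,u,h}$ (of order $s$) sandwiched between two half-step projections $\Phi_{\nu,h/2}$ and $\Phi_{\mu,h/2}$, each of which is a first-order symplectic integrator for the Lagrange-multiplier vector field $D_{\lambda_j g^j}$ on $\MG$ by Lemma~\ref{lem:remn}. For~\ref{it:order}, preservation of $g=0$ and $f=0$ is immediate from construction: $\mu$ and $\nu$ are defined implicitly so that $(r_{\alpha+1},p_{\alpha+1})\in L\MG$, and Assumptions~\ref{ass:rankg} and~\ref{ass:elliptic} render~\eqref{eq:f3} invertible, so this implicit system is locally uniquely solvable. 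For the order, a standard local-error estimate combines the $\cO(h^{s+1})$ error of $\psi_{H,u,h}$ with the $\cO(h^2)$ error of each half-projection; because the projections are applied symmetrically, a Strang-type argument bounds the local error by $\cO(h^{\min(s,2)+1})$, giving $\bar{s}=\min(s,2)$.

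For~\ref{it:syme}, the identity $\Phi_{\Lambda,h}^{-1}=\Phi_{\Lambda,-h}$ is immediate from~\eqref{eq:proj}, and a direct calculation yields
\begin{displaymath}
 \Psi_{H,u,g,h}^{*} \;=\; \Phi_{\nu',h/2}\circ(\psi_{H,u,-h})^{-1}\circ\Phi_{\mu',h/2},
\end{displaymath}
where $\mu',\nu'$ are the multipliers fixed by requiring $\Psi_{H,u,g,-h}$ to map $L\MG$ into itself. If $\psi_{H,u,h}$ is symmetric, the central factor collapses to $\psi_{H,u,h}$; uniqueness of the implicit multipliers, together with the fact that the constraint pair $(g,f)$ does not depend on $t$ and that both $(\mu',\nu')$ and $(\nu,\mu)$ satisfy the same system enforcing $(r_{\alpha+1},p_{\alpha+1})\in L\MG$, then forces $\mu'=\nu$ and $\nu'=\mu$, proving $\Psi_{H,u,g,h}^{*}=\Psi_{H,u,g,h}$.

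For~\ref{it:symp}, the crucial observation is that when $u\equiv 0$ every factor of $\Psi_{H,0,g,h}$ is symplectic---the central factor because $\tilde{X}_{H,0}$ is Hamiltonian by hypothesis, and the outer two by Lemma~\ref{lem:remn}---so $\Psi_{H,0,g,h}$ is symplectic on $L\MG$. Consequently, the modified flow of $\tilde{X}_{H,0,g}$ is symplectic, which by Definition~\ref{def:PH} means that $\tilde{X}_{H,u,g}$ is \ac{PH}. Lemma~\ref{lem:PH} then supplies the decomposition $\tilde{X}_{H,u,g}=\tilde{X}_0+u_l\tilde{Z}^l$, the modified Hamiltonian $\tilde{H}$, and the output $\tilde{y}^l=\tilde{Z}^l(\tilde{H})$. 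The estimates~\eqref{eq:HO} follow from~\ref{it:order}: order $\bar{s}$ gives $\tilde{X}_{H,u,g}=X_{H,u,g}+\cO(h^{\bar{s}})$, and matching the drift and input pieces on $L\MG$ yields $\tilde{X}_0=D_H+\cO(h^{\bar{s}})$ and $\tilde{Z}^l=U\indices{_i^l}\frac{\partial}{\partial p_i}+\cO(h^{\bar{s}})$; substituting into the representation of $\tilde{H}$ supplied by Lemma~\ref{lem:PH} and into $\tilde{y}^l=\tilde{Z}^l(\tilde{H})$ delivers $\tilde{H}=H+\cO(h^{\bar{s}})$ and $\tilde{y}^l=y^l+\cO(h^{\bar{s}})$.

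The main difficulty will lie in~\ref{it:symp}: one must verify that the perturbed multipliers $\tilde{\Lambda}$ produced by Lemma~\ref{lem:remn} contribute only an $\cO(h^{\bar{s}})$ correction to $\tilde{H}$ while leaving the symplectic form on $L\MG$ invariant (as required by the \ac{PH} analogue of Theorem~\ref{th:modHam}), and that the non-affine decomposition of Lemma~\ref{lem:PH} commutes with the asymptotic expansion in $h$ cleanly enough to recover the original passive output $y^l$ to the announced order.
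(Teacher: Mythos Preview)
Your treatment of~(ii) and~(iii) matches the paper's argument: the implicit-equation rewriting for symmetry, and the route through Definition~\ref{def:PH} and Lemma~\ref{lem:PH} for the \ac{PH} structure and the output estimate.

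The gap is in~(i). A Strang-type local-error estimate presupposes that the outer half-steps $\Phi_{\nu,h/2}$ and $\Phi_{\mu,h/2}$ are integrators of known order for a \emph{fixed} vector field. Here they are not: $\mu$ and $\nu$ are determined implicitly by the requirement $(r_{\alpha+1},p_{\alpha+1})\in L\MG$, and nothing in your outline establishes that these implicit multipliers approximate the true $\lambda_j$ to the needed order. Without that, the composition cannot be compared to the flow of $X_{H,u,g}$, because the constraint piece of your splitting is a priori the wrong vector field. The paper closes this gap by an explicit \ac{BCH} computation: it writes $\Psi_{H,u,g,h}=\exp(\tfrac{h}{2}D_{\tn_j g^j})\exp(h\tilde{X}_{H,u})\exp(\tfrac{h}{2}D_{\tm_j g^j})\id$ via Lemma~\ref{lem:remn}, expands to obtain the modified vector field $\tilde{X}_{H,u,g}$ (retaining Lie-bracket terms when $s=2$), and then \emph{uses the constraint equations themselves}---$\tilde{X}_{H,u,g}(g^l)=0$ and $\tilde{X}_{H,u,g}(f^l)=0$ on $L\MG$---to derive $\tn_j-\tm_j=\cO(h)$ and $(\tn_j+\tm_j)/2=\lambda_j+\cO(h^{\bar s})$. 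The first relation is what kills the commutator terms in the $s=2$ case; the second yields $\tilde{X}_{H,u,g}=X_{H,u,g}+\cO(h^{\bar s})$. You do flag the multiplier issue in your final paragraph, but you place it in~(iii); in fact it is the heart of~(i), and the relation $\tilde\lambda_j=\lambda_j+\cO(h^{\bar s})$ obtained there is exactly what you later need in~(iii) to pass from $\tilde{X}_{H,u,g}=X_{H,u,g}+\cO(h^{\bar s})$ to $D_{\tilde H}=D_H+\cO(h^{\bar s})$.
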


\begin{proof}
 The method preserves the constraints by construction. The proof about the order of the method follows the
 same lines as the one given in~\cite{reich1996} except that, since we are dealing with \acl{PH} vector fields,
 Lie brackets have to be used instead of Poisson brackets. We will compute $\tilde{X}_{H,u,g}$, the modified
 vector field generating $\Psi_{H,u,g,h}$, and show that it agrees with $X_{H,u,g}$ up to the first or second order,
 depending on whether $\Psi_{H,u,g,h}$ is, respectively, first or second order.

 Let us consider the case $s = 1$. Using the exponential notation and Lemma~\ref{lem:remn}, the composition~\eqref{eq:reich} takes the form
 \begin{displaymath}
  \Psi_{H,u,g,h} = \exp\left( \frac{h}{2}D_{\tn_j g^j} \right)\exp\left( h \tilde{X}_{H,u} \right)
   \exp\left( \frac{h}{2}D_{\tm_j g^j} \right)\id \;,
 \end{displaymath}
 where $\tilde{X}_{H,u} = X_{H,u} + \cO(h)$. Applying the \ac{BCH} formula~\cite{varadarajan} to the product of the first two factors and
 truncating after the first term gives 
 \begin{equation} \label{eq:prod1}
  \exp\left( \frac{h}{2}D_{\tn_j g^j} \right)\exp\left( h \tilde{X}_{H,u} \right) = \exp\left( h \tilde{X}' \right)
 \end{equation}
 with $\tilde{X}' = \tilde{X}_{H,u} + \frac{1}{2}D_{\tn_j g^j} + \cO(h)$. Applying \ac{BCH} again to include the third factor gives
 \begin{equation} \label{eq:prod2}
  \exp\left( h \tilde{X}' \right)\exp\left( \frac{h}{2}D_{\tm_j g^j} \right) = \exp\left( h \tilde{X}_{H,u,g} \right)
 \end{equation}
 with the modified vector field $\tilde{X}_{H,u,g} = \tilde{X}_{H,u} + \frac{1}{2}\left( D_{\tn_j g^j} + D_{\tm_j g^j} \right) + \cO(h)$.
 Using~\eqref{eq:Dlg} and $\tilde{X}_{H,u} = X_{H,u} + \cO(h)$, we can write the modified vector field as
 \begin{equation} \label{eq:Xmod1}
  \tilde{X}_{H,u,g} = X_{H,u} + \frac{\tn_j + \tm_j}{2} D_{g^j} + \cO(h) \;.
 \end{equation}

 The hidden constraints $f^l = 0$ imply that
 \begin{equation} \label{eq:fut}
  \tilde{X}_{H,u,g}(f^l) = X_{H,u}(f^l) + \frac{\tn_j + \tm_j}{2} D_{g^j}(f^l) + \cO(h) = 0 \;.
 \end{equation}
 It follows from~\eqref{eq:fut},~\eqref{eq:Dlg} and~\eqref{eq:fu}, that the Lagrange multipliers $\lambda_j$ and the `modified Lagrange
 multipliers' $\tilde{\nu}_j$ and $\tilde{\mu}_j$ are related by the equation
 $(\tn_j + \tm_j)/2 = \lambda_j + \cO(h)$,
 which when substituted back in~\eqref{eq:Xmod1} gives the desired result:
 \begin{displaymath}
  \tilde{X}_{H,u,g} = X_{H,u} + \lambda_j D_{g^j} + \cO(h) = X_{H,u,g} + \cO(h) \;.
 \end{displaymath}

 For $s = 2$ we follow the same procedure, but we truncate the \ac{BCH} formula after the second term. For the expression~\eqref{eq:prod1},
 the intermediate vector field is
 \begin{displaymath}
 \tilde{X}' = \tilde{X}_{H,u} + \frac{1}{2}D_{\tn_j g^j} + \frac{h}{4}\left[ D_{\tn_j g^j} , \tilde{X}_{H,u} \right] + \cO(h^2)
 \end{displaymath}
 where $[\cdot,\cdot]$ is the standard Lie bracket. Using the initial assumption $\tilde{X}_{H,u} = X_{H,u} + \cO(h^2)$,  we can write
 $\tilde{X}'$ as
 \begin{displaymath}
  \tilde{X}' = X_{H,u} + \frac{1}{2}D_{\tn_j g^j} + \frac{h}{4}\left[ D_{\tn_j g^j} , X_{H,u} \right] + \cO(h^2) \;.
 \end{displaymath}
 Regarding~\eqref{eq:prod2}, the modified vector field for the complete scheme is
 \begin{multline*}
  \tilde{X}_{H,u,g} = X_{H,u} + \frac{1}{2}\left( D_{\tn_j g^j} + D_{\tm_j g^j} \right) \\
   + \frac{h}{4}\left[ D_{\tn_j g^j} , X_{H,u} \right] + 
      \frac{h}{4}\left[ X_{H,u} + \frac{1}{2}D_{\tn_j g^j} , D_{\tm_j g^j} \right] + \cO(h^2) \;.
 \end{multline*}
 Using~\eqref{eq:Dlg} and the skew symmetry and bilinearity of the Lie bracket, the vector field can be equivalently written as
 \begin{multline} \label{eq:prodt}
  \tilde{X}_{H,u,g} = X_{H,u} + \frac{\tn_j + \tm_j}{2} D_{g^j} + \frac{h}{4}\left[(\tn_j - \tm_j) D_{ g^j} , X_{H,u} \right] \\
   + \frac{h}{8}\left[ \tn_j D_{g^j} , \tm_j D_{g^j} \right] + \cO(h^2) \;.
 \end{multline}

 In order to extract information from the equation $\tilde{X}_{H,u,g}(g^l) = 0$, let us first open the brackets in~\eqref{eq:prodt} and
 write
 \begin{multline*}
  \tilde{X}_{H,u,g} = X_{H,u} + \frac{\tn_j + \tm_j}{2} D_{g^j} + \frac{h}{4}(\tn_j - \tm_j) D_{ g^j} X_{H,u} \\
   - \frac{h}{4}X_{H,u}\left((\tn_j - \tm_j) D_{ g^j} \right) 
   + \frac{h}{8}\tn_j D_{g^j}\left( \tm_j D_{g^j} \right) - \frac{h}{8}\tm_j D_{g^j}\left( \tn_j D_{g^j} \right) + \cO(h^2) \;.
 \end{multline*}
 Taking into account that $f^l = X_{H,u}(g^l) = 0$ and $D_{g^j}(g^l) \equiv 0$, we have that
 \begin{displaymath}
  \tilde{X}_{H,u,g}(g^l) = \frac{h}{4}(\tn_j - \tm_j) D_{ g^j} f^l + \cO(h^2) = 0 \;,
 \end{displaymath}
 from which we can see that modified Lagrange multipliers satisfy the order relation
 \begin{equation} \label{eq:nu_mu}
  \tn_j - \tm_j = \cO(h) \;.
 \end{equation}
 By substituting~\eqref{eq:nu_mu} back in~\eqref{eq:prodt} we can verify that the commutators are actually second order, that is,
 \begin{align}
  \tilde{X}_{H,u,g} & = X_{H,u} + \frac{\tn_j + \tm_j}{2} D_{g^j} +
   \frac{h}{8}\left[ \tn_j D_{g^j} , \left(\tn_j + \cO(h)\right) D_{g^j} \right] + \cO(h^2) \nonumber \\
                    & = X_{H,u} + \frac{\tn_j + \tm_j}{2} D_{g^j} + \cO(h^2) \;. \label{eq:Xmod2}
 \end{align}
 From $\tilde{X}_{H,u,g}(f^l) = 0$ and~\eqref{eq:fu} we conclude that, when $s = 2$,
 $(\tn_j + \tm_j)/2 = \lambda_j + \cO(h^2)$,
 so the desired result follows:
 $\tilde{X}_{H,u,g} = X_{H,u} + \lambda_j D_{g^j} + \cO(h^2) = X_{H,u,g} + \cO(h^2)$.

 For statement~(ii), notice that, when restricted to $L \MG$, the method~\eqref{eq:reich} can be
 described by the implicit equations
 \begin{subequations} \label{eq:impl}
 \begin{align}
  \begin{pmatrix}
   r_{\alpha+1} \\ p_{\alpha+1} + h G(r_{\alpha+1})^\top \mu
  \end{pmatrix}
   & = \psi_{H,u,h}
   \begin{pmatrix}
    r_\alpha \\ p_\alpha - h G(r_{\alpha})^\top \nu
   \end{pmatrix} \label{eq:psi} \\
                 g(r_{\alpha + 1}) & = g(r_{\alpha}) \\
  f(r_{\alpha + 1},p_{\alpha + 1}) & = f(r_{\alpha},p_{\alpha}) \;,
 \end{align}
 \end{subequations}
 where $r_\alpha, p_\alpha$ are the independent variables and $r_{\alpha+1}, p_{\alpha + 1}$ are the dependent variables. The vectors
 $\nu, \mu$ are (also dependent) dummy variables that can be discarded after $r_{\alpha+1}, p_{\alpha + 1}$ have been found.

 After reversing time (that is, after substituting $h$ by $-h$), equation~\eqref{eq:psi} becomes
 \begin{displaymath}
  \begin{pmatrix}
   r_{\alpha+1} \\ p_{\alpha+1} - h G(r_{\alpha+1})^\top \mu
  \end{pmatrix}
   = \psi_{H,u,-h}
   \begin{pmatrix}
    r_\alpha \\ p_\alpha + h G(r_{\alpha})^\top \nu
   \end{pmatrix} \;.
 \end{displaymath}
 Recall that $\psi_{H,u,-h} =\psi_{H,u,h}^{-1}$ if $\psi_{H,u,h}$ is symmetric. Therefore, when restricted to $L \MG$, the reverse-time
 method is
 \begin{subequations} \label{eq:impl_h}
 \begin{align}
  \begin{pmatrix}
   r_\alpha \\ p_\alpha + h G(r_{\alpha})^\top \nu
  \end{pmatrix}
   & = \psi_{H,u,h}
  \begin{pmatrix}
   r_{\alpha+1} \\ p_{\alpha+1} - h G(r_{\alpha}+1)^\top \mu
  \end{pmatrix} \\
                 g(r_{\alpha + 1}) & = g(r_{\alpha}) \\
  f(r_{\alpha + 1},p_{\alpha + 1}) & = f(r_{\alpha},p_{\alpha}) \;,
 \end{align}
 \end{subequations}
 which is the same as~\eqref{eq:impl}, but with $r_\alpha,p_\alpha$ and $\nu$ interchanged with $r_{\alpha+1},p_{\alpha+1}$ and $\mu$,
 respectively. This implies that, if we input $r_{\alpha+1}, p_{\alpha+1}$ as independent variables, we recover $r_{\alpha}, p_{\alpha}$
 as the dependent variables, that is: $\Psi_{H,u,g,-h}$ is the inverse mapping of $\Psi_{H,u,g,h}$. (In general, the vectors $\nu, \mu$
 obtained using~\eqref{eq:impl} will be different from those obtained using~\eqref{eq:impl_h}, but this is inconsequential since these are
 dummy variables.)

 In statement~(iii), the fact that $\tilde{X}_{H,u,g}$ is \ac{PH} follows directly from Lemma~\ref{lem:remn}, Definition~\ref{def:PH}
 and the fact that the composition of symplectic maps is again symplectic. In other words, $\Psi_{H,u,g,h}$ is symplectic when
 $u\indices{_l_\alpha} \equiv 0$, so
 \begin{equation} \label{eq:nona}
  \tilde{X}_{H,u,g} = X_0 + u_lZ^l = D_{\tilde{H}} + u_lZ^l + \tilde{\lambda}^jD_{g^j}
 \end{equation}
 for some Hamiltonian function $\tilde{H}$ and some input vector fields $Z^l$. Since the method is of order $\bar{s}$, we have
 \begin{equation} \label{eq:sb}
  \tilde{X}_{H,u,g} = D_H + u_lU\indices{_i^l}\frac{\partial H}{\partial p_i} + \lambda^jD_{g^j} + \cO(h^{\bar{s}}) \;.
 \end{equation}
 By setting $u_l = 0$ and recalling that $\tilde{\lambda}_j = \lambda_j + \cO(h^{\bar{s}})$, it follows from~\eqref{eq:nona}
 and~\eqref{eq:sb} that $D_{\tilde{H}} = D_H + \cO(h^{\bar{s}})$, which implies that $\tilde{H} = H + \cO(h^{\bar{s}})$, and, in turn, that
 $Z^l = U\indices{_i^l}\frac{\partial}{\partial p_i} + \cO(h^{\bar{s}})$.
 The modified output functions are thus
 $\tilde{y}^l = Z^l(\tilde{H})$, that is, $\tilde{y}^l = U\indices{_i^l}\frac{\partial }{\partial p_i}\big(H + \cO(h^{\bar{s}})\big) + \cO(h^{\bar{s}}) = 
                 y^l + \cO(h^{\bar{s}})$.
\end{proof}

Let us now turn to the problem of energy balance under sample and hold. The power balance~\eqref{eq:pb} implies that
\begin{equation} \label{eq:spb}
 H_{\alpha + 1} - H_\alpha = \int_{\alpha h}^{\alpha h + h} u_l(t)y^l(t) \rd t \;,
\end{equation}
where we have defined the sampled Hamiltonian $H_\alpha := H(x_\alpha)$. A usual way to improve the transient behavior of the system is to add
damping by means of a continuous control law ~\cite{ortega2001}
\begin{equation} \label{eq:damp}
 u_l(t) = -K\indices{_l_j}y^j(t) \;,
\end{equation}
with $\{ K\indices{_l_j} \}$ a symmetric and positive semi-definite matrix. With the control law
~\eqref{eq:damp}, the power
balance~\eqref{eq:spb} results in the dissipation inequality
$H_{\alpha + 1} - H_\alpha \le 0$,
which guarantees that $H_\alpha$ decreases monotonically and, if the right conditions are met, the system converges to a state of minimal energy.\\

Suppose that the output is being sampled and that the input is being held at intervals of length $h$. The control sequence is then given by
\begin{equation} \label{eq:dampd}
 u\indices{_l_\alpha} = -K\indices{_l_j}y\indices{^j_\alpha}
\end{equation}
and the power balance~\eqref{eq:spb} takes the form
\begin{displaymath}
 H_{\alpha + 1} - H_\alpha = u\indices{_l_\alpha} \int_{\alpha h}^{\alpha h + h} y^l(t) \rd t =
  u\indices{_l_\alpha} \int_{0}^{h} y^l(\alpha h + \tau) \rd \tau \;.
\end{displaymath}
Applying Taylor's theorem to the integral term gives
\begin{displaymath}
 H_{\alpha + 1} - H_\alpha = \sum_l u\indices{_l_\alpha} \left( y\indices{^l_\alpha} h + \cO(h^2) \right) =
  -h K\indices{_l_j}y\indices{^l_\alpha}y\indices{^j_\alpha} + \sum_l u\indices{_l_\alpha} \cO(h^2) \;,
\end{displaymath}
so $H_\alpha$ decreases when $h$ is small enough and the norm of $y_\alpha$ is large enough.

Since the approximate sampled-data model~\eqref{eq:reich} is also \ac{PH} (cf. item~(iii) of Theorem~\ref{thm:main}), it satisfies
(again, after applying Taylor's theorem)
\begin{equation} \label{eq:HOa}
 \tilde{H}_{\alpha + 1} - \tilde{H}_\alpha = \sum_l u\indices{_l_\alpha} \left( \tilde{y}\indices{^l_\alpha} h + \cO(h^2) \right)
\end{equation}
for some $\tilde{H}$ and $\tilde{y}$. According to~\eqref{eq:HO}, the energy balance~\eqref{eq:HOa} takes the form
\begin{displaymath}
 H_{\alpha+1} - H_\alpha + \cO(h^{\bar{s}}) = \sum_l u\indices{_l_\alpha} \left( \left( y\indices{^l_\alpha} + \cO(h^{\bar{s}})\right) h 
 + \cO(h^2) \right) \;.
\end{displaymath}

The same control sequence~\eqref{eq:dampd} produces
\begin{displaymath}
 H_{\alpha+1} - H_\alpha = -h K\indices{_l_j}y\indices{^l_\alpha}y\indices{^j_\alpha} + \sum_l u\indices{_l_\alpha} \cO(h^2) + \cO(h^{\bar{s}}) \;.
\end{displaymath}
Thus, for $\bar{s} = 2$, the qualitative behavior of the approximated sampled data model is the same as the exact one: $H_\alpha$ decreases when
$h$ is small enough and the norm of $y_\alpha$ is large enough.

\subsection*{Example: A double planar pendulum (continued)}

Let us compute a sampled-data model for the double pendulum described in the previous examples. The first step is to
compute a sample-data model for the simple unconstrained \ac{PH} system
$X_{H,u} = D_H + u_lU\indices{_i^l} \frac{\partial }{\partial p_i}$,
where $H$ is given by~\eqref{eq:HEx} and $U\indices{_i^l}$ by~\eqref{eq:U}. The unconstrained and unactuated Hamiltonian vector
field $D_H$ describes a pair of masses with initial positions $r\indices{^a_0}$ and $r\indices{^b_0}$ and initial momenta $p\indices{_a_0}$
$p\indices{_b_0}$, simply falling under the influence of gravity. The exact flow generated by $D_H$, the drift, denoted by
$(r_{\alpha+1}, p_{\alpha+1}) = \phi_{H,h}(r_{\alpha}, p_{\alpha})$, is then given by
\begin{align*}
 r\indices{^{a_x}_{\alpha+1}} &= r\indices{^{a_x}_\alpha} + \frac{h}{m_a}p\indices{_{a_x}_\alpha} \;, \quad 
 r\indices{^{a_y}_{\alpha+1}} = r\indices{^{a_y}_{\alpha}} + \frac{h}{m_a}p\indices{_{a_y}_\alpha} - \bar{g}\frac{h^2}{2} \\
 r\indices{^{b_x}_{\alpha+1}} &= r\indices{^{b_x}_\alpha} + \frac{h}{m_b}p\indices{_{b_x}_\alpha} \;, \quad
 r\indices{^{b_y}_{\alpha+1}} = r\indices{^{b_y}_{\alpha}} + \frac{h}{m_b}p\indices{_{b_y}_\alpha} - \bar{g}\frac{h^2}{2}
\end{align*}
and
\begin{align*}
 p\indices{_{a_x}_{\alpha+1}} &= p\indices{_{a_x}_\alpha} \;, \quad 
 p\indices{_{a_y}_{\alpha+1}} = p\indices{_{a_y}_{\alpha}} - m_a\bar{g}h \\
 p\indices{_{b_x}_{\alpha+1}} &= p\indices{_{b_x}_\alpha} \;, \quad
 p\indices{_{b_y}_{\alpha+1}} = p\indices{_{b_y}_{\alpha}} - m_b\bar{g}h \;.
\end{align*}
The exact flow generated by $u_lU\indices{_i^l} \frac{\partial }{\partial p_i}$, the control vector field without drift, is denoted by
$(r_{\alpha+1}, p_{\alpha+1}) = \phi_{u,h}(r_{\alpha}, p_{\alpha})$. It is given by
\begin{displaymath}
 r\indices{^i_{\alpha+1}} = r\indices{^i_{\alpha}} \;, \quad i \in \left\{a_x,a_y,b_x,b_y \right\}
\end{displaymath}
and
\begin{align*}
 p\indices{_{a_x}_{\alpha+1}} &= p\indices{_{a_x}_\alpha} - \frac{h}{l_a^2}(u_1 - u_2)r\indices{^{a_y}_\alpha} \;, \quad
 p\indices{_{a_y}_{\alpha+1}} = p\indices{_{a_y}_{\alpha}} + \frac{h}{l_a^2}(u_1 - u_2)r\indices{^{a_x}_\alpha} \\
 p\indices{_{b_x}_{\alpha+1}} &= p\indices{_{b_x}_\alpha} - \frac{h}{l_b^2}u_2 r\indices{^{\delta_y}_\alpha} \;, \quad
 p\indices{_{b_y}_{\alpha+1}} = p\indices{_{b_y}_{\alpha}} + \frac{h}{l_b^2}u_2 r\indices{^{\delta_x}_\alpha} \;.
\end{align*}
From \S~\ref{sec:split}, we know that a simple symmetric method of order two for $X_{H,u}$ is
\begin{equation} \label{eq:psi_exa}
 \psi_{H,u,h} = \phi_{H,\frac{h}{2}}\circ\phi_{u,h}\circ\phi_{H,\frac{h}{2}} \;.
\end{equation}

Notice that $\phi_{u,h} = \id$ when $u \equiv 0$, so $\psi_{H,u,h} = \phi_{H,\frac{h}{2}}\circ\phi_{H,\frac{h}{2}} = \phi_{H,h}$, which is a
symplectic map because it is the exact solution of a Hamiltonian system. Therefore, $\tilde{X}_{H,u}$ is \ac{PH}, and,
from Theorem~\ref{thm:main}, it follows that the implicit method~\eqref{eq:reich}, with $\Psi_{H,u,h}$ as
in~\eqref{eq:psi_exa} is the exact solution of a \ac{PH} system $\tilde{X}_{H,u,g}$ with Hamiltonian function $\tilde{H} = H + \cO(h^2)$ and
output function $\tilde{y} = y + \cO(h^2)$ (i.e., $\bar{s} = s = 2$).\\

\begin{table}
\footnotesize
\centering
\begin{tabular}{ll}
 Value         & Description \\
 \hline
 $l_a = 0.6 [\metre]$    & Length of the first link \\
 $l_b = 0.3 [\metre]$    & Length of the second link \\
 $m_a = 0.2 [\kilogram]$ & Value of the first mass \\
 $m_b = 0.6 [\kilogram]$ & Value of the second mass \\
 $\bar{g} = 9.81 [\metre\per\second\square]$ & Acceleration due to gravity
\end{tabular}
\caption{Parameters for the double pendulum.}
\label{tab:param}
\end{table}

\begin{figure}
\begin{center}
\includegraphics[width=0.7\textwidth]{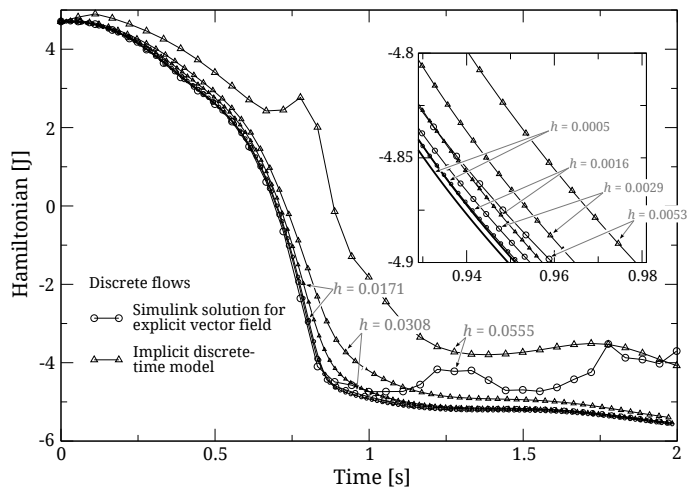}
\end{center}
\caption{Results of the numerical experiment. The Hamiltonian function is plotted against time. The explicit model (simulated using Matlab's
 module Simulink) is compared with the implicit model (simulated using a Matlab script). As expected, $H$ is monotonically decreasing when $h$ is
 small enough and the time series converges as $h$ goes to zero.}
\label{fig:Hvsh}
\end{figure}

The sampled-data model was tested using the parameters shown in Table~\ref{tab:param}. For illustration purposes, we chose a damping control
$u_\alpha = -0.3 \cdot y_\alpha$ and simulated the closed-loop system using the sampled-data model~\eqref{eq:reich}. Figure~\ref{fig:Hvsh}
shows the discrete-time series of $H$ for different values of $h$. It can be seen that the time series converge and, as expected, the value
of $H_\alpha$ decreases monotonically when $h$ is small enough (in this case, less or equal to 30 [\milli\second]). For comparison purposes,
we have included the evolution of $H$ that is obtained by simulating (with Matlab's Simulink) the explicit model developed in~\cite{castanos2013}
in series with a sampler and a zero-order hold.

\section{Conclusions}

We have extended the second-order integration method presented in~\cite{reich1996}. The original method applies to autonomous Hamiltonian systems and,
being symplectic, preserves the Hamiltonian structure of the continuous-time system. The extended method can be applied to port-Hamiltonian
systems, which are Hamiltonian systems equipped with input--output pairs. The extended method preserves the port-Hamiltonian structure.
Affinity in the controls is lost by the method but, fortunately, the passivity properties of the continuous-time system can be recovered by 
a suitable redefinition of the output. Interestingly, the relation between the original and the new output is also of second order.

The integration method can be used with the purposes of numerical simulation or with the purpose of deriving discrete-time models to be used
in the design of discrete-time control laws.

\bibliography{integrators,extras}

\end{document}